\def\BibTeX{{\rm B\kern-.05em{\sc i\kern-.025em b}\kern-.08em\kern-.1667em\lower.7ex\hbox{E}\kern-.125emX}}
\tikzstyle{block} = [draw, rectangle, minimum height=2em, minimum width=3em]
\tikzstyle{virtual} = [coordinate]
\newtheorem{prop}{Proposition}
\newtheorem{thm}{Theorem}
\newtheorem{cor}{Corollary}
\newtheorem{lem}{Lemma}
\newtheorem{prob}{Problem}
\newcounter{subprob}[prob]
\renewcommand{\thesubprob}{\theprob\alph{subprob}}
\newenvironment{subprob}{
  \refstepcounter{subprob}
  \par\noindent\textit{Problem \thesubprob:}\space\ignorespaces
}{}
\newtheorem{defi}{Definition}
\newtheorem{rem}{Remark}
\newtheorem{asm}{Assumption}
\DeclareMathOperator{\Ker}{Ker}
\DeclareMathOperator{\enc}{Enc}
\DeclareMathOperator{\dec}{Dec}
\DeclareMathOperator{\vecc}{Vec}
\title{Anomaly Detection with LWE Encrypted Control} 
\author{Rijad Alisic, Junsoo Kim, and Henrik Sandberg
\thanks{This research was funded in part by the Swedish Foundation for Strategic Research through the CLAS project (RIT17-0046), by the Seoul National University of Science and Technology, by the Swedish Research Council (2016-00861, 2023-04770), and by the Swedish Civil Contingencies Agency (CERCES2).}
\thanks{R.~Alisic and H.~Sandberg are with the Division of Decision and Control Systems at KTH Royal Institute of Technology, Sweden {\tt \small (e-mail: rijada@kth.se, hsan@kth.se)}.}
\thanks{J.~Kim is with the Department of Electrical and Information Engineering, Seoul National University of Science and Technology, Korea {\tt \small (e-mail: junsookim@seoultech.ac.kr)}.}
}
\begin{document}

\maketitle



\begin{abstract}
    Detecting attacks using encrypted signals is challenging since encryption hides its information content. We present a novel mechanism for anomaly detection over Learning with Errors (LWE) encrypted signals without using decryption, secure channels, nor complex communication schemes. Instead, the detector exploits the homomorphic property of LWE encryption to perform hypothesis tests on transformations of the encrypted samples. The specific transformations are determined by solutions to a hard lattice-based minimization problem. While the test's sensitivity deteriorates with suboptimal solutions, similar to the exponential deterioration of the (related) test that breaks the cryptosystem, we show that the deterioration is polynomial for our test. This rate gap can be exploited to pick parameters that lead to somewhat weaker encryption but large gains in detection capability. Finally, we conclude the paper by presenting a numerical example that simulates anomaly detection, demonstrating the effectiveness of our method in identifying attacks.
\end{abstract}


\section{Introduction} \label{sec:introduction}
Recent interest in the security of control systems has been focused on understanding the learning-based attacker, who uses data to perform stealthy or undetectable attacks~\cite{muller2018,park2016}. The data is obtained from disclosure attacks on the cyber component that connects the plant to a computational element used to control it, thus creating a cyber-physical system (CPS). Additionally, due to the advancements in data-driven methods for control~\cite{Coulson2019,proctor2016}, recent works show that attackers do not need explicit models to generate undetectable attacks. Rather, they can compute such attacks directly from a plant's input-output data, a relatively unexplored topic, see~\cite{wang2019,alisic2021ecc,taheri2021,adachi2023}. Many undetectable attacks further require the system to be in a particular state~\cite{TEIXEIRA2015,weerakkody2015,gheitasi2022}, forcing the attacker to correctly estimate the state. 


Limiting the information an attacker can extract from the data limits its attack-learning capabilities. One of two common methods is to target the attacker's estimator by injecting noise into the data transmitted over the cyber component~\cite{sandberg2015,cortes2016,Hassan2020}. However, introducing noise into the CPS opens up for \emph{stealthy} attacks; detectable attacks that remaing hidden by exploiting the detector's reduced sensitivity, a necessary setting to manage otherwise high false-alarm rates~\cite{giraldo2018}. Encryption is another commonly used tool to limit the information an attacker could extract from a signal~\cite{kim2022dynamic}. Traditional encryption methods, such as Advanced Encryption Standardization (AES)~\cite{AES2001}, can be cumbersome to implement in a control-specific system because a hand-shake mechanism must be established, where the potential exchange of credentials, such as decryption, evaluation, or authentication keys, must be carried out safely~\cite{kogiso2018}, thus introducing another attack vector. 
These solutions rely on the controller actively participating in the system's security without getting corrupted. Furthermore, not all services are beneficent, such as the honest-but-curios actor~\cite{Boenisch2023}. Thus, traditional methods only ensure data confidentiality during transmission, which leaves computational resources, like controllers, vulnerable to attackers that could collude with or directly hack into them to extract information.

\begin{figure}
\centering
    {\begin{tikzpicture}[pin distance=1cm,>=stealth,auto, node distance=1cm,rotate=90]

\node [virtual] (model) {};

\node [block] (system) {Plant $G$};

\node[virtual, above = 41pt of model](ysplit){};
\node[block, right = 15pt of ysplit] (enc) {Enc};

\node (devi) [right=54pt of model,scale=0.1] {\includesvg{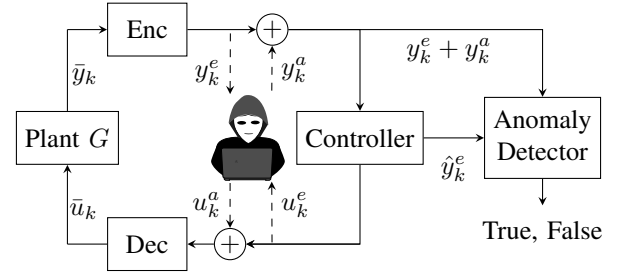}};

\node[block,right = 4pt of devi](operator){Controller};

\node[virtual, below = 30pt of system](usplit2){};
\node[block, right = 15pt of usplit2] (dec) {Dec};

\node[virtual, right = 23pt of dec](attu){};
\node[virtual, right = 23pt of enc](atty){};
\node[block, right = 75pt of devi](anom){\makecell{Anomaly\\ Detector}};

\node[circle, draw, inner sep=0.02cm ,right = 10pt of dec](adddow){$+$};
\node[virtual, right = 43pt of adddow](usplit){};

\node[circle, draw, inner sep=0.02cm ,right = 26pt of enc](addup){$+$};

\node[virtual, right = 27pt of addup](upp){};

\node[block, below = 10pt of anom, draw opacity=0](anomout){${\text{True, False}}$};

\draw [->] (anom.south) -- (anomout.north){};

\draw[->](ysplit) -- (enc){};
\draw[->](enc) -- (addup){};

\draw[->] (addup.east) |- node[yshift=-8pt, xshift=-55pt] {$y^e_k+y^a_k$} (anom.north){};
\draw[->] (operator.east) -- node[yshift=-20pt] {$\hat{y}^e_k$} (anom.west){};

\draw [-] (system) -- node[yshift=-2pt, xshift=15pt] {$\bar y_k$} (ysplit){};
\draw [->] (addup.east) |-  (operator.north){};

\draw [->] (operator.south) -|  (adddow.east){};

\draw[->] (usplit) -- (adddow.east){};
\draw[->] (adddow) -- (dec.east){};
\draw[-] (dec) |- (usplit2){};
\draw [->] (usplit2) --  node[yshift=-2pt, xshift=15pt] {$\bar u_k$} (system){};
\draw [dashed, <-] (addup) -- node {$y_k^a$} ([yshift = -2ex] devi.north){};
\draw [dashed, <-] ([yshift = -2ex]devi.south) -- node[yshift=3pt] {$u_k^e$} ([yshift = -2ex] attu){};

\draw [dashed, <-] ([yshift = 1.5ex]devi.north) -- node[yshift=-4pt] {$y_k^e$} ([yshift = 1.5ex] atty){};
\draw [dashed, <-] (adddow) -- node {$u_k^a$} ([yshift = 1.5ex] devi.south){};
\end{tikzpicture}}
     \caption{An attacker can inject stealthy attack signals $y^a_k$ and $u^a_k$ into the loop, by observing $\bar y_k$ and $\bar u_k$. While encryption hinders the attacker from these observations, it also obstructs the detection of anomalies. We propose an anomaly detector that uses {$y^e_k+y_k^a$} and $\hat{y}^e_k$ for detection through encryption.} \label{fig:adv_attack__vector6}
\end{figure}

Homomorphic encryption (HE) allows computations to be performed directly on encrypted data, with the results then carrying over post-decryption. Thanks to this property, HE can \emph{also protect the data during computation}. Several forms of HE exist, including the Paillier cryptosystem capable of performing addition and multiplication on encrypted messages~\cite{paillier1999}. In this paper, we consider a cryptosystem based on the Learning with Errors (LWE) problem~\cite{regev2009}, which is claimed to be \emph{post-quantum secure}. Unlike traditional cryptosystems that are based on \emph{worst-case hard} problems, which are efficiently solved by quantum computers, the LWE problem is a lattice problem that is believed to be \emph{average-case hard}~\cite{ajtai1996}.

Combining HE, or other encryption solutions, with anomaly detection has generally been intractable due to the need for elaborate encryption, key sharing, and authentication schemes. Without authentication, for instance, an attacker could ignore detectability since the encryption hides the attack from detection unless the detector can establish elaborate communication schemes with the plant~\cite{alexandru2022}, such as multiple rounds of secure communication with the plant~\cite{buns2018}, or securely sharing keys~\cite{alisic2023modelfreelwe}. We want to highlight that the recent paper~\cite{alexandru2022} seems to be the first work to treat this problem for dynamical CPS, which is why the literature on this problem is sparse. Our paper follows the same line of work. However, we present a novel approach to anomaly detection over encrypted signals that do not need elaborate communication schemes. In particular, our contributions address the following problem:

\begin{informalProblem}
    Can the LWE cryptosystem, which enables homomorphic encrypted control, simultaneously enable anomaly detection without revealing the messages? Specifically, we want to know:
    \begin{enumerate}[a)]
        \item What computations are required (from the controller) to realize and enable anomaly detection?
        \item Are there any trade-offs regarding detection power, detection time, or encryption strength that must be considered?
    \end{enumerate}
\end{informalProblem}

Note that our problem differs from the typical approach in the literature of modifying the communication scheme. Instead, we ask how to exploit the properties of LWE to obtain a signal that can be used for anomaly detection while keeping the standard communication scheme in control systems. To answer these questions, we rely on the property that homomorphism allows us to transform encrypted messages into other, valid, encrypted messages. First, we seek a transformation of the plant output that produces an encrypted \emph{residual} that we can use for hypothesis testing. Creating a residual requires (some) knowledge about the plant dynamics and controller. We can embed such knowledge into the controller design, typically done through a Kalman filter. We will use the controller's filter to predict the next output and use it to form a sequence of encrypted residual messages. Second, we seek transformations of these encrypted residual messages so that the resulting message becomes independent of the encryption keys, allowing us to perform anomaly detection. The anomaly detection takes the form of a hypothesis test that does not reveal information about the secret vector of the cryptosystem, thus preserving the original LWE security.

\subsubsection*{Contribution}
The main contributions of this paper is a transformation of samples that leads to a tractable detection scheme, which:
\begin{enumerate}
    \item \label{point:consistent} produces an output that allows for a hypothesis test, while \emph{not} revealing the secret vector (Theorem~\ref{thm:main_theorem_rejection_rule}),
    \item \label{point:better_properties} the test has better statistical power than the related test that reveals the secret vector (Theorem~\ref{thm:bounds}),
\end{enumerate}
We want to emphasize~\ref{point:consistent}), in the sense that we are not breaking the cryptosystem nor solving the LWE problem~\cite{regev2009}. Rather, by reviewing~\cite{regev2009}, we discovered that their algorithm uncovered the secret vector by performing a hypothesis test that rejects the uniform distribution in favor of an alternative one. We base anomaly detection on the opposite test; rejecting the alternative distribution in favor of the uniform one. This subtle difference leads to a test with better statistical properties, \ref{point:better_properties}).

Unsurprisingly, the statistical power of the detection test relies on solving a lattice problem, namely finding the shortest vector (smallest $\Vert \cdot \Vert_2$) in a lattice. An approximate solution (within an exponentially growing bound) is given by the  Lenstra-Lenstra-Lov\'asz (LLL) algorithm~\cite{lenstra1982}, which takes basis vectors as inputs and tries to output the shortest vectors by finding a set of (nearly) orthogonal vectors in the lattice. Although the LLL algorithm runs in polynomial time, its output depends on the order of the initial vectors, which makes the search space combinatorial already before the algorithm is applied. In this paper, however, the detection mechanism does not require the shortest vector. Instead, a longer, but \textit{sufficiently short}, vector suffices to perform detection, and the statistical power of the anomaly detection increases if shorter vectors are found. However, since the LLL algorithm runs in polynomial time, we argue that efficient anomaly detection can be devised over encrypted signals.



The outline of the paper is as follows. We set up some preliminary notation and introduce the LWE-based encryption scheme in Section~\ref{sec:prelim}. We define our problem statement and detection mechanism in Section~\ref{sec:problem}. Our main results are presented in Section~\ref{sec:results}, where the anomaly detection essentially boils down to a binary hypothesis test. In Section~\ref{sec:controllers}, we show how to realize the detection mechanism. Trade-off considerations are discussed in Section\ref{sec:vectors}. Finally, we discuss the subtle difference between our hypothesis test, and the one set up by Regev to prove the hardness of the LWE problem in Section~\ref{sec:hyptest}. We conclude the paper with numerical examples in Section~\ref{sec:numerical} and conclusions in Section~\ref{sec:conclusions}.



\section{Preliminaries}\label{sec:prelim}
In this section, we shall introduce the tools we need to define LWE-based encryption and the related problems we need to consider to enable anomaly detection. Encryption is typically defined over a discrete set of numbers, so let us define the number space we will work with. Let the rounding operator be denoted by $\lfloor \cdot \rceil$, and the floor operator by $\lfloor \cdot \rfloor$, where $\lfloor a \rfloor$ is the largest integer smaller than $a \in \mathbb R$.

\begin{defi}
    Denote the set of integers modulo $q$ by: $$  \mathbb{Z}_q:= \left\{i\in\mathbb{Z}  
    \left \vert -\frac{q}{2}   \leq i <\frac{q}{2} \right. \right \}.$$
\end{defi}

Integers outside the set $\mathbb{Z}_q$ must be mapped back onto it. We achieve this mapping by the modulo operation.
\begin{defi}
    We define the modulus operation as $$  a\bmod q := a - \left\lfloor\frac{a+ \frac{q}{2} }{q}\right\rfloor q, $$ so that each integer $a\in \mathbb{Z}$ maps onto $\mathbb{Z}_q$. 
\end{defi}
The space of $n$-dimensional vectors, where each element is in $\mathbb{Z}_q$ is denoted by $ \mathbb{Z}_q^n$, and similarly for the ${n \times m}$-dimensional matrices by $\mathbb{Z}_q^{n\times m}$.

We require two distributions defined over $\mathbb{Z}_q$ to formulate the LWE encryption scheme. The first one is the discrete uniform distribution, where $x \sim  \mathcal{U}_{q}$ implies that $\mathrm{Pr}(X=x) = \frac{1}{q}, \quad \text{for } x \in \mathbb Z_q$. The second one is the discrete normal distribution, defined as follows.

\begin{defi}\label{def:discretenormal}
    The discrete normal distribution, $\mathcal{N}_\mathcal{S} (\mu, \sigma^2)$, is defined over a support set $\mathcal{S}$, with the probability distribution:
    \begin{equation*}
        \mathrm{Pr}_\mathcal{S}(X=x)=\frac{\mathrm{e}^{-\frac{(x-\mu)^2}{2\sigma^2}}}{\sum \limits_{y \in \mathcal{S}}\mathrm{e}^{-\frac{(y-\mu)^2}{2\sigma^2}}}, \quad \text{for } x \in \mathcal{S}.
    \end{equation*}
\end{defi}
We will only work with $\mu = 0$, the distribution's mean for odd $q$. Similarly, $\sigma^2$ is the variance of the distribution. The encrypted messages, essentially linear combinations of noise and plaintext messages, can be considered samples from a distribution with the plaintext message $\mu$.



\subsection{LWE-based Encryption}

The encryption scheme we consider is based on the LWE framework~\cite{regev2009}, which, in addition to the encrypted message $\tilde m$, discloses a random public matrix $P$ that has been used in the encryption. Loosely, the encryption operator outputs two quantities when applied to the message $m$, $\enc^\lambda_s(m)=(P, \tilde m)\coloneqq m^e$, where $\lambda$ is a set of parameters that defines the \emph{security level}; see Definition~\ref{def:lwe}, and the \emph{secret vector} $s$ is used for encryption and decryption. The tuple $(P, \tilde m)$ is communicated to the computation service, while the secret vector $s$ is kept at the plant. The public matrix $P$ is needed, in addition to the message $\tilde m$ to uncover the plaintext message: $\dec^\lambda_s(P,\tilde m)=m$.

The term \emph{homomorphic} refers to the property that operations, such as addition or multiplication, on the encrypted signals carry over to the decrypted signals. To stabilize LTI systems, a feedback controller typically needs the ability to add signals together and multiply them with scalars. Therefore, an encryption scheme that allows for stabilizing controllers needs the following property:
\begin{equation}\label{eq:decrypt_homomoprh}
\begin{aligned}
\dec^\lambda_s\left ( \left(b_1 \odot \enc^\lambda_s\left (m_1\right ) \right) \oplus \left (b_2 \odot \enc^\lambda_s \left(m_2\right )\right)  \right) \\ = b_1m_1+b_2m_2,
\end{aligned}
\end{equation}
for $b_1, b_2 \in \mathbb{Z}$ and where $\oplus$ and $\odot$ refer to operations on the encrypted signals that carry over as addition and multiplication, respectively, in the plaintext signals. 

Encryption based on the LWE problem is defined as follows.
\begin{defi}\label{def:lwe}
The message $m \in \mathbb Z_q^p$ is encrypted with $\lambda=(v,r,\sigma^2,q)$-level of security using the scheme
\begin{equation}\label{eq:lweencmsg}
    \enc^\lambda_s (m) = (P, Ps+mr+e \bmod q) \coloneqq m^e,
\end{equation}
where the elements of $P \in \mathbb{Z}_q^{p \times v}$ and $s\in \mathbb{Z}_q^v$ are sampled from $\mathcal{U}_q$. The $i$th element of the error vector, $e \in \mathbb{Z}_q^p$, is sampled from the discrete normal distribution, $e(i) \sim \mathcal{N}_{\mathbb{Z}_q}(0,\sigma^2)$, $\forall i$ and is independent from the other elements. The scaling number $r>0$ separates the plaintext message $m$ and the error $e$, see Remark~\ref{rem:roundingerror}.
\end{defi}


Note that the secret vector $s$ is generated once and is used for all messages, which is necessary for the homomorphic property of the LWE-based scheme. The decryption operator also uses the secret vector to decrypt the message.
\begin{defi}
The input signals to the physical system are decrypted using the scheme
\begin{equation}\label{eq:decrypt}
    \dec^\lambda_s(m^e) =\left \lfloor \frac{Ps+mr+e \bmod q-Ps \bmod q}{r} \right \rceil.
\end{equation}
\end{defi}
We will drop the scripts $\lambda$ and $s$ in the rest of the paper, as they will be fixed. Note that the decryption of individual messages might fail because of two reasons, shown next.
\begin{lem}\label{lem:succ_decrypt}
    The decryption operator in~\eqref{eq:decrypt} returns the correct message if and only if $\vert mr + e \vert < \frac{q}{2}$ and $\vert e \vert <\frac{r}{2}$.
\end{lem}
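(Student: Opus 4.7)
The plan is to first simplify the decryption formula so that the secret vector and public matrix disappear, then analyze the two remaining conditions separately. Using that the underlying arithmetic in the ciphertext space is modular, I would first observe that the numerator in~\eqref{eq:decrypt} satisfies $(Ps+mr+e \bmod q) - (Ps \bmod q) \equiv mr+e \pmod q$, so that after the implicit reduction modulo $q$ (the subtraction is carried out in $\mathbb{Z}_q$) the decryption reduces to $\lfloor ((mr+e) \bmod q)/r \rceil$. This single intermediate formula is what both directions of the biconditional will rely on.

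For the \emph{if} direction I would assume both $|mr+e|<q/2$ and $|e|<r/2$. By Definition~2 the range of the mod operator is $[-q/2,q/2)$, so the first inequality forces $(mr+e) \bmod q = mr+e$ exactly, i.e.\ the modular reduction acts as the identity. The expression then becomes $\lfloor m + e/r \rceil = m + \lfloor e/r \rceil$, and the second inequality gives $|e/r|<1/2$, so the rounding produces $0$ and the decryption returns $m$.

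For the \emph{only if} direction I would argue by contradiction and split on which of the two inequalities fails. If $|e|\ge r/2$ but $|mr+e|<q/2$, the formula still evaluates to $m + \lfloor e/r \rceil$, but now $|e/r|\ge 1/2$, so $\lfloor e/r \rceil \ne 0$ and the output is not $m$. If instead $|mr+e|\ge q/2$, then $(mr+e) \bmod q = mr+e - kq$ for some nonzero integer $k$, and the decryption becomes $m + \lfloor (e-kq)/r \rceil$. Using that $e \in \mathbb{Z}_q$ implies $|e|<q/2$, I would bound $|e-kq| \ge |k|q - |e| > q/2$, which is strictly larger than $r/2$ under the standing assumption $r<q$ (implicit in Definition~4, since $r$ scales a single-element plaintext into a space of size $q$). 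Hence $\lfloor (e-kq)/r \rceil \ne 0$ and decryption fails in this case too.

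The main obstacle I anticipate is just the careful bookkeeping of the modular reduction in the difference $(Ps+mr+e \bmod q) - (Ps \bmod q)$, since the paper's notation leaves the outer modulo implicit, and establishing cleanly that in the wrap-around case $|mr+e|\ge q/2$ no conspiracy of signs between $e$ and $kq$ can accidentally restore the correct message. Everything else is a direct consequence of the definitions of $\bmod$ and $\lfloor\cdot\rceil$ on $\mathbb{Z}_q$ and does not require any nontrivial machinery.
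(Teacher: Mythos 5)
Your proposal is correct and follows exactly the route the paper itself indicates: the paper gives no explicit proof, remarking only that the lemma follows ``by applying the decryption operator \dots{} and keeping track of potential overflows due to the $\bmod q$ operation,'' and your argument is precisely that computation carried out in full, with the reduction to $\lfloor ((mr+e)\bmod q)/r\rceil$ and the case split on which inequality fails. The only points worth flagging are minor: the boundary behaviour of $\lfloor\cdot\rceil$ at exactly $\pm\tfrac{1}{2}$ (irrelevant under the strict inequalities) and your correctly identified need to read the outer subtraction as taking place in $\mathbb{Z}_q$, both of which the paper leaves implicit as well.
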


\begin{rem}~\label{rem:roundingerror}
The error $e$ must be small relative to $m$ to not create a faulty decryption. As evident by Lemma~\ref{lem:succ_decrypt}, a synthetic separation is achieved by scaling the message with $r>0$ in~\eqref{eq:lweencmsg} so that $\frac{\vert e\vert }{r}$ becomes less than $\frac{1}{2}$. 
\end{rem}

The (partially) homomorphic property~\eqref{eq:decrypt_homomoprh} for the LWE-based encryption is realized by ordinary addition of the tuples.
\begin{lem}\label{lem:ishomomorphic}
    The LWE-based cryptosystem can be homomorphic with respect to addition. If $\vert m_1r+m_2r+e_1+e_2 \vert < \frac{q}{2},$ and {$\vert {e_1+e_2} \vert <\frac{r}{2},$} then, $\dec(m_1^e+m_2^e \bmod q)= m_1+m_2,$ where $\enc(m_1)=m_1^e$ and $\enc(m_2)= m_2^e$.
\end{lem}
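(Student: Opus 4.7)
The plan is to show that the componentwise mod-$q$ sum $m_1^e + m_2^e$ has exactly the structural form of an LWE encryption of $m_1+m_2$, and then invoke Lemma~\ref{lem:succ_decrypt} on that combined ciphertext. The assumed size bounds $|m_1 r + m_2 r + e_1 + e_2| < q/2$ and $|e_1+e_2| < r/2$ are precisely what Lemma~\ref{lem:succ_decrypt} requires for the decryption formula~\eqref{eq:decrypt} to recover the aggregate plaintext, so the proof reduces to bookkeeping.

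First I would write $m_i^e = (P_i,\, P_i s + m_i r + e_i \bmod q)$ as in~\eqref{eq:lweencmsg}, and then compute the coordinate-wise sum $m_1^e + m_2^e \bmod q$ as the pair $(P,\, c)$, where
\begin{equation*}
P \coloneqq (P_1+P_2) \bmod q, \qquad c \coloneqq \bigl(P_1 s + P_2 s + (m_1+m_2) r + (e_1+e_2)\bigr) \bmod q.
\end{equation*}
Setting $m \coloneqq m_1+m_2$ and $e \coloneqq e_1+e_2$, the task is to identify $c$ with $Ps + mr + e \bmod q$. Since reduction mod $q$ is a ring homomorphism, $Ps \bmod q = (P_1 s + P_2 s) \bmod q$, which gives $c = Ps + mr + e \bmod q$. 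Thus $(P,c)$ is formally an LWE ciphertext of the plaintext $m$ with noise $e$ under the same secret $s$ and scale $r$.

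At this point I would simply apply Lemma~\ref{lem:succ_decrypt} to the triple $(m,e,r)$: the hypotheses $|mr+e| < q/2$ and $|e| < r/2$ are exactly the lemma's conditions for $\dec(P,c) = m = m_1+m_2$. Combining these gives $\dec(m_1^e + m_2^e \bmod q) = m_1 + m_2$, which is the claim.

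The only real subtlety — and the step I would be most careful with — is justifying that reducing $P_1+P_2$ mod $q$ does not disturb the internal cancellation in the decryption formula~\eqref{eq:decrypt}, which subtracts $Ps \bmod q$ rather than $Ps$ itself. This is exactly why the lemma's conclusion depends only on the mod-$q$ values of $Ps$ and of $Ps + mr + e$, and the homomorphism of $\cdot \bmod q$ makes this transparent. Everything else is rote arithmetic, and no assumption on the independence or distribution of $e_1, e_2$ is needed — only the stated magnitude bounds.
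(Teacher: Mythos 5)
Your proposal is correct and follows exactly the route the paper indicates: the paper gives no detailed proof, remarking only that the lemma follows ``by applying the decryption operator to (sums of) encrypted messages and keeping track of potential overflows due to the $\bmod q$ operation,'' which is precisely your bookkeeping of identifying $m_1^e+m_2^e \bmod q$ as an LWE ciphertext of $m_1+m_2$ with noise $e_1+e_2$ and then invoking Lemma~\ref{lem:succ_decrypt}. Your explicit attention to the fact that $\bmod\, q$ is a ring homomorphism, so reducing $P_1+P_2$ does not disturb the cancellation of $Ps$ in~\eqref{eq:decrypt}, is a welcome detail the paper leaves implicit.
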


Lemmas~\ref{lem:succ_decrypt} and ~\ref{lem:ishomomorphic} are easy to prove by applying the decryption operator to (sums of) encrypted messages and keeping track of potential overflows due to the $\bmod q$ operation. Note that multiplications with (plaintext) integer scalars can be realized by summing the message repeatedly
\begin{equation*}
    c m = \dec(cm^e \bmod q)=\dec \left( \sum \limits_{i=1}^c m^e \bmod q \right),
\end{equation*}
where $c \in \mathbb Z_q$, which will give the correct decryption if $\vert cmr+ce \vert < \frac{q}{2} $ and $\vert ce \vert < \frac{r}{2}$.

The conditions for decryption will always have a non-zero probability of failure. Specifications about the underlying system need, therefore, to be given early on, such as the largest message $m_k$ that might appear, so that appropriate parameters are chosen. These parameters, $q$ and $r$ to be specific, help to determine the security level $\lambda$ in Definition~\ref{def:lwe}, and subsequently, the probability of failure upon decryption. They were furthermore shown to affect the probability that a learning-based attacker would succeed with their attack in a previous work~\cite{alisic2023modelfreelwe}.

\subsection{Lattices}
LWE-based cryptosystems operate over integer messages and integer-linear combinations of them. Lattices represent such systems well.
\begin{defi}
A lattice $\mathbb L$ is the set of points given by
\begin{equation*}
    \mathbb L \coloneqq \left \{ a_1v_1 + a_2v_2 + \ldots + a_nv_n \, \vert \, a_i\in \mathbb Z, v_i \in \mathcal L \subset \mathbb Z^{m} \right \}.
\end{equation*}
The lattice's basis, $\mathcal L$, can be represented as a matrix $L \in \mathbb Z_q^{m \times n}$, whose columns are the basis vectors $v_i$.
\end{defi}
The anomaly detector we define will operate over the following class of \textit{$q$-ary lattices}.
\begin{defi}\label{def:qaryLattice}
    Let $\Ker_q V$ for matrix $V \in \mathbb{Z}_q^{n\times m}$ denote $$ \Ker_q V:= \left \{v \in\mathbb{Z}_q^m \vert Vv = 0 \bmod q\right \}.$$
\end{defi}
This set can be computed (with polynomial complexity) using the Hermite Normal Form (HNF)~\cite{kannan1979}. 

When working with the discrete normal distribution over lattices, the variance parameter, $\sigma^2$, must be sufficiently large. Specifically, if $\sigma > \eta_\epsilon(\mathbb L)$, where $\eta_\epsilon(\mathbb L)$ is the \emph{smoothing parameter} for lattice $\mathbb L$, the discrete normal distribution approximates the continuous normal distribution within a statistical distance of $\mathcal O (\epsilon)$~\cite{boneh2011}. 
For a sum of samples from the discrete normal distribution to behave like a continuous normal distribution, the prior variance of the samples must accommodate this. Specifically, if $X \sim \mathcal{N}_{\mathbb{Z}_q} (\mu_x, \sigma^2_x)$ and $Y \sim \mathcal{N}_{\mathbb{Z}_q} (\mu_y, \sigma^2_y)$, then for $X+Y$ to be approximately sampled as $\mathcal{N}_{\mathbb{Z}_q} (\mu_x+\mu_y, \sigma^2_x+\sigma^2_y)$, we require that $\sigma_x^2, \sigma_y^2 \geq 2\eta_\epsilon^2(\mathbb L)$. Furthermore, a too large $\sigma^2$ may lead to a considerable weight of the resulting sum falling outside of the modulus space, $\mathrm{Pr}(\Vert c \Vert_2|X|> \frac{q}{2}) \not \approx 0$, which would induce an erroneous decryption. Therefore, will need make sure that $\sigma^2$ is not too large either, as shown in the following:
\begin{asm}\label{asm:noise_var}
    The noise variance is sufficiently large to accommodate a specified number of additions and multiplications of the signals, quantified by the vector $c $, so that $ \frac{q^2}{4} \frac{\Omega^2_\epsilon}{\Vert c \Vert_2^2} > \sigma^2 > \Vert c \Vert^2_2 \eta_\epsilon^2(\mathbb L)$, for some $0<\epsilon<1$. 
\end{asm}


The $\eta_\epsilon(\mathbb L)$ in Assumption~\ref{asm:noise_var} is sufficient to ensure the property that the resulting sum can be considered as having been independently sampled from a discrete normal distribution~\cite{boneh2011}. Furthermore, we pick a $\Omega_\epsilon<1$ so that decrypting the sum fails with a low probability, $\mathrm{Pr}(\Vert c \Vert_2|X|> \frac{q}{2}) \leq \mathcal{O}(\epsilon)$.

\section{Problem Formulation} \label{sec:problem}
We are now ready set up the problems we solve in this paper. Consider the attacked Plant $G$ shown in Fig.~\ref{fig:adv_attack__vector6}, which we model as
\begin{equation} \label{eq:system}
    G: \begin{cases}
    \begin{aligned}
    x_{k+1} & =  Ax_k + B u_k + w_k, \\
      y_k & =   Cx_k + Du_k +v_k.
    \end{aligned}
    \end{cases}
\end{equation}
This a discrete-time system where $x_k \in\mathbb{R}^n$ is the state, $u_k\in\mathbb{R}^m$ is the input, $y_k\in\mathbb{R}^p$ is the output, and ${A, \, B, \, C, \, D}$ are matrices of appropriate sizes with the elements being in $\mathbb{R}$. The process noise $w_k\in \mathbb{R}^n$ and the measurement noise $v_k\in \mathbb{R}^p$ are assumed to be (continuous) Gaussian, zero-mean, i.i.d random variables with $\mathbb{E} [ w_k w_k^\top]=\Sigma_w$ and $\mathbb{E} [ v_k v_k^\top]=\Sigma_v$. 

While LWE-based encryption only works with integers, physical plants typically output real values. Therefore, a conversion between reals and integers is required. We will use a quantization through a standard rescaling and rounding,
\begin{equation}\label{eq:signal_rescaling}
    \bar y_k = \lceil c y_k \rfloor, \quad u_k = \frac{1}{c} \bar u_k,
\end{equation}
where the desired level of accuracy is determined by a scalar $c >1$ (a treatment of such a \emph{uniform quantizer}, is given in~\cite{delchamps1990}). Due to the transformations~\eqref{eq:signal_rescaling}, the output signal can be encrypted, and the input signal can be converted back to a real value after decryption.

Quantization introduces precision limitations of the considered signals. Typically, any value below the precision level is ignored, introducing an error that could be made arbitrarily small by choosing large parameters in~\eqref{eq:signal_rescaling} and a sufficiently large message representation (such as a 64-bit binary sequence). These errors propagate similarly during post-computation. However, \emph{informally}, the LWE encryption shifts the plaintext message to some other uniformly random chosen value in the message space. The small errors incurred by ignoring values below the precision threshold propagate as significant errors in the deciphered message. Even a slight change in the ciphertext message leads to substantial differences post-decryption.

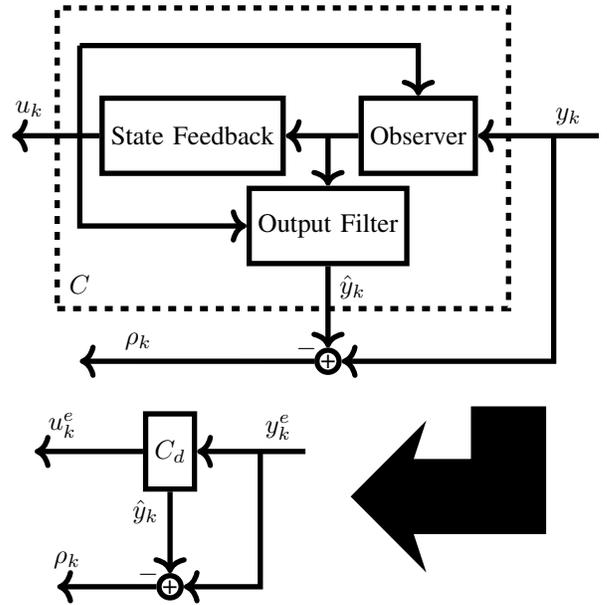
\begin{figure}
     \centering
    \begin{tikzpicture}[scale=0.6, every line/.style={scale=0.6}]
    \node[draw, line width = 2, rectangle, minimum height=1cm] (state) at (-3,0) {State Feedback};
    \node[draw, line width = 2, rectangle, minimum height=1cm] (observer) at (2,0) {Observer};
    \node[draw, line width = 2, rectangle, minimum height=1cm] (delay) at (0,-2) {Output Filter};
    \node[draw, line width = 2, circle, inner sep=0.01cm, minimum width=0.1cm, minimum height=0.1cm] (plus) at (0,-5) {+};
    
    \draw[line width=2, ->] (6,0) -- node[pos=0.25,above] {$y_k$} (observer);
    \draw[line width=2, ->] (observer) -- (state);
    \draw[line width=2, ->] (state) -- node[yshift=10,xshift=-10] {$u_k$} (-7,0);
    \draw[line width=2, -] (-5.5,0) -- (-5.5,2);
    \draw[line width=2, ->] (-5.5,2) -| (observer);
    \draw[line width=2, ->] (0,0) -- (delay);
    \draw[line width=2, ->] (delay) -- node[pos=0.85, left,yshift=-5] {$-$} node[pos=0.25,right] {$\hat y_k$} (plus);
    \draw[line width=2, ->] (5,0) |- (plus);
    \draw[line width=2, ->] (-5.5,0) |- (delay);
    \draw[line width=2, ->]  (plus) -- node[pos=0.75, above] {$\rho_k$} (-5.5,-5);

    \node[draw, line width = 2,rectangle,fill=none,dashed, minimum width=6cm, minimum height=4cm] at (-1,-0.5) {};

    \node at (-5.5,-3.3) {$C$};

    \draw[line width=2, -{Stealth[length=10mm, width=20mm]}, black ,line width=10mm] (4,-6) |- (0.5,-8);
    
    \begin{scope}[xshift=-3.5cm,yshift=-7cm]
        \node[draw, line width = 2, rectangle, minimum height=1cm] (state) at (0,0) {$C_d$};
        \node[draw, line width = 2, circle, inner sep=0.01cm, minimum width=0.1cm, minimum height=0.1cm] (plus) at (0,-3) {+};
        
        \draw[line width=2, ->] (state) -- node[yshift=10,xshift=-10] {$u^e_k$} (-3,0);
        \draw[line width=2, ->] (3,0) -- node[pos=0.25,above] {$y_k^e$} (state);
        \draw[line width=2, ->] (state) -- node[pos=0.85, left,yshift=-5] {$-$} node[pos=0.25,left] {$\hat y_k$} (plus);
        \draw[line width=2, ->]  (plus) -- node[yshift=10,xshift=-15] {$\rho_k$} (-2.5,-3);
        \draw[line width=2, ->] (2,0) |- (plus);
    \end{scope}
\end{tikzpicture}
    \caption{Instead of individually converting each block in the controller to be compatible with encrypted signals, we reduce the compounding approximation errors by combining the three blocks (two static and one dynamic) into a single dynamic block, $C_d$ with two outputs. It is made compatible with integer signals by scaling and rounding its parameters.
    }    \label{fig:controller_enc_fig}
\end{figure}
Furthermore, as we saw in Lemma~\ref{lem:ishomomorphic}, additions and multiplications increase the variance of the injected noise during encryption. Implementing controllers over encrypted signals should, therefore, ideally, minimize the number of computational steps to avoid a potential detour into numbers below the precision limit or deterioration of the result due to the increase in noise. Controllers and detectors in the traditional encrypted-free setting contain internal loops and are split up into multiple steps, which is shown in the top part of Fig.~\ref{fig:controller_enc_fig}. For the encrypted setting, it is beneficial if as few operations as possible are done. In this paper, we will aggregate the operations into a single transformation, depicted in the bottom part of Fig.~\ref{fig:controller_enc_fig} as $C_d$. However, $C_d$ will still have an internal state variable, meaning that internal loops within it are not fully removed.
The estimated output of the plant $y^e_k$ based on its internal state $x_k^c$ and a sequence of possibly corrupted past measurements can be written as
\begin{equation}\label{eq:controller_estimates_output}
         \hat{y}^e_k = C_ex_k^c+D_e\begin{bmatrix}
        \tilde y_0^e \\ \vdots \\ \tilde y_{k-1}^e
    \end{bmatrix}
    \end{equation}
where $C_e$ and $D_e$ are matrices with integer elements.

Equation~\ref{eq:controller_estimates_output} does not limit control performance since the extra signals that the controller outputs do not affect the encrypted input signals, $u_k^e$. Thus, as long as the controller can be cast in a compatible integer formulation, we can add an extra output using the computation it already makes.

The filtered signal $\hat y^e_k$ in Equation~\ref{eq:controller_estimates_output} will also be scaled dynamically with another quantization factor, which we will see in Section~\ref{sec:controllers}, related to the \emph{multiplication depth} of the cryptosystem. The estimated output $\hat y^e_k$ must match the scaling of $\tilde y^e_k$ to create a residual. For now, we shall say this scaling is done implicitly. 
The anomaly detector looks, therefore, at the following encrypted residual,
\begin{equation}\label{eq:encrypted_residual} 
    \rho_k^e = (\tilde y^e_k-\hat{y}_k^e) \bmod q,
\end{equation}
where $\tilde y_k^e =  y^e_k+y^a_k$, to detect an attack. 
We model an attacker that can additively inject signals $u^a_k$ and $y^a_k$ onto the actuators and sensors, respectively. During a disclosure phase, the attacker can also read the signals $y^e_k$ and $u^e_k$ and generate attacks based on them using an algorithm the defender does not know about. It is possible that an attacker can learn \emph{undetectable} attacks from the data~\cite{taheri2021,alisic2021ecc}, which implies that $\rho_k=0, \, \forall k$. See Definition 1 and Lemma 3.1 in~\cite{Pasqualetti2013} for a definition of undetectable attacks. However, to make our analysis tractable, we require that the attacks can be detected.
\begin{asm}\label{asm:detectable}
The attacker injects detectable attacks. Specifically, the attacks occasionally make $\rho_k \neq 0$.
\end{asm}
Solving the following problem enables detection.
\begin{prob}\label{prob:main}
    Consider the following two hypotheses:
    \begin{itemize}
        \item $\mathcal{H}_0:$ $\dec^\lambda_s( \rho_k^e)=0$, $\forall k$,
        \item $\mathcal{H}_1:$ $\dec^\lambda_s( \rho_k^e)\neq 0$ for some time steps $k$.
    \end{itemize}
    Without using the secret vector $s$, how can $\mathcal{H}_0$ be rejected in favor of $\mathcal{H}_1$ with a specific probability $\alpha$ on the Type~I Error?
\end{prob}

In plain detection terminology, Problem~\ref{prob:main} asks how to detect anomalies in encrypted signals with a set false alarm rate. We decouple the problem of detection \emph{given} a residual signal, from the control-theoretic aspects of obtaining the residual. Furthermore, the residual signal is rarely always zero as in $\mathcal{H}_0$. Even in a deterministic setting, small and sparse errors are introduced due to the quantization of the signals. An example would be limit cycles for the control of unstable plants. We shall therefore also seek to detect attacks under such scenarios:
\begin{prob}\label{prob:secondary}
    Consider the following two hypotheses:
    \begin{itemize}
        \item $\mathcal{H}_0^e:$ $\Vert \dec^\lambda_s( \rho^e)\Vert _\infty \leq \epsilon_\infty$ and $\Vert \dec^\lambda_s( \rho^e)\Vert_0 \leq \epsilon_0$,
        \item $\mathcal{H}_1^e:$ $\Vert \dec^\lambda_s( \rho^e)\Vert _\infty > \epsilon_\infty$ or  $\Vert \dec^\lambda_s( \rho^e)\Vert _0 > \epsilon_0$,
    \end{itemize}
    where $\rho^e= \begin{bmatrix}\left(\rho_1^e\right )^\top & \dots & \left(\rho_N^e \right)^\top \end{bmatrix}^\top$. Without using the secret vector $s$, how can $\mathcal{H}^e_0$ be rejected in favor of $\mathcal{H}^e_1$ with a false alarm probability $\alpha$?
\end{prob}

At first glance, Problem~\ref{prob:secondary} seems to be a mere attempt at error-handling for potentially sparse errors. However, it should be viewed as the \emph{controller's impact on detection strength}, a relatively novel idea in detection literature. By proposing a mechanism that solves the hypothesis tests, we can determine which characteristics of the residual are necessary for detection. Furthermore, we will also observe how these characteristics affect detection strength, which can be used to design controllers. However, we will save the design problem for future work.
The rest of the paper is organized to solve these two sub-problems:
\begin{itemize}
    \item \label{point:detect} \emph{Detection}: We develop a theory that allows us to perform a hypothesis test on encrypted signals and solve Problems~\ref{prob:main} and~\ref{prob:secondary} given access to the residual \eqref{eq:encrypted_residual}.
    \item \emph{Characteristics}: We show what characteristics from the residual must be known to perform detection and how those characteristics affect the detection strength.
\end{itemize}
The detection problem will partially utilize tools found in the cryptography community, such as searching for short vectors in a lattice. In contrast, we will use control-theoretic tools to obtain the characteristics of the residual signal.

\section{Main Results}\label{sec:results}

We shall now consider the first method we need for detection. It was first reported in~\cite{alisic2023modelfreelwe} as part of a method to bypass the encryption when learning attacks. In particular, consider the encrypted residual messages from~\eqref{eq:encrypted_residual}:
\begin{equation*} 
    \enc^\lambda_s(\rho_k) = (P_k^\rho, P_k^\rho s+r\rho_k+e_k^\rho) \bmod q.
\end{equation*}
where we have redefined $P_k^\rho= (P_k^y-P_k^{\hat y}) $ and $\rho_k= (\tilde y_k-\hat y_k)$ for brevity. We will start by ``filtering out" the part that depends on the secret vector, $P_k^\rho s$, by finding linear combinations of encrypted messages that remove it.

Finding such a \emph{filtering vector} $d$ can be done algorithmically by computing the corresponding $q$-ary lattice for the matrix comprised of vectorized matrices $P_k^\rho$ stacked as columns in
\begin{equation} \label{eq:matrix_public_vectorized}
    \mathcal{P}^\rho= \begin{bmatrix}
        \vecc P_0^\rho & \vecc P_1^\rho &\cdots & \vecc P_{N-1}^\rho \end{bmatrix}.
\end{equation}
The filtering vectors $d$ define the points that belong to the $q$-ary lattice associated with~\eqref{eq:matrix_public_vectorized}, namely, $d \in \Ker_q {\mathcal{P}^\rho}$. 
There will typically be $N-pv$ such vectors when $N$ is large. 

Multiplying such a filtering vector $d=\begin{bmatrix} d_1 & d_2 & \dots \end{bmatrix}^\top$ with the residuals $\rho^e$, modulo $q$, gives
\begin{equation} \label{eq:filtered_ensemble}
    \begin{aligned}
      d^\top \rho^e \bmod q  &= (0, \sum \limits_k (\rho _kr+e_k^\rho)d_k \bmod q) \\
    & = (0, r d^\top \rho + d^\top e^\rho) \bmod q \\& = (0, r d^\top \rho + d^\top e^\rho),
    \end{aligned}
\end{equation}
where we have assumed for the last equality that $\vert r d^\top \rho + d^\top e^\rho \vert \leq \frac{q}{2}$. 
The noise term in~\eqref{eq:filtered_ensemble}, $d^\top e^\rho$, obscures the combined plaintext sum of messages, $r d^\top \rho$, thus ensuring the confidentiality of individual samples. We can state the following proposition thanks to Assumption~\ref{asm:noise_var}.
\begin{prop} \label{prop:error_dist}
The error vector $e^\rho$ from the residuals is statistically close to the zero-mean discrete Gaussian, $e^\rho \sim \mathcal{N}_{\mathbb{Z}_q}(0,\Sigma_\rho)$, for some covariance matrix $\Sigma_\rho$.
\end{prop}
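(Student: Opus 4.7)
The plan is to express $e^\rho$ explicitly as an integer-linear transformation of the original LWE noise samples produced by the encryption operator, and then apply the smoothing-parameter machinery already invoked via Assumption~\ref{asm:noise_var} to transfer the Gaussianity of the atomic noise samples to the aggregated residual noise.

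First I would unfold the definition of $\rho_k^e$ in~\eqref{eq:encrypted_residual}. The term $\tilde y_k^e$ contributes its raw LWE noise $e_k^y \sim \mathcal{N}_{\mathbb{Z}_q}(0,\sigma^2 I)$ (independently across $k$ and across coordinates), while $\hat y_k^e$ is produced by the controller block $C_d$ in Fig.~\ref{fig:controller_enc_fig} from the past ciphertexts $\tilde y_0^e,\dots,\tilde y_{k-1}^e$ through the integer matrices $C_e$ and $D_e$ in~\eqref{eq:controller_estimates_output}. Hence one can write
\begin{equation*}
    e_k^\rho \;=\; e_k^y \;-\; \sum_{j=0}^{k-1} M_{k,j}\,e_j^y,
\end{equation*}
for some integer coefficient matrices $M_{k,j}$ determined by $C_e$, $D_e$, and the internal recursion of $C_d$. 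Stacking the residual noises into the vector $e^\rho=[(e_0^\rho)^\top \cdots (e_{N-1}^\rho)^\top]^\top$ yields the linear relation $e^\rho = M\,\mathbf{e}^y$, where $\mathbf{e}^y$ is the concatenation of the atomic LWE noises and $M$ is an integer (block lower-triangular) matrix.

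Next I would invoke the smoothing-parameter result cited after Definition~\ref{def:qaryLattice}: provided $\sigma>\eta_\epsilon(\mathbb L)$ with enough slack to cover the operator norm of $M$, an integer-linear combination of independent discrete Gaussians on $\mathbb{Z}_q$ is within statistical distance $\mathcal O(\epsilon)$ of a single discrete Gaussian whose variance is the sum of squared coefficients times $\sigma^2$. This is exactly what Assumption~\ref{asm:noise_var} is tailored to guarantee, once one identifies the ``specified number of additions and multiplications'' with the rows of $M$; the lower bound $\sigma^2 > \Vert c \Vert_2^2 \eta_\epsilon^2(\mathbb L)$ ensures the smoothing condition, and the upper bound $\sigma^2 < \tfrac{q^2}{4}\Omega_\epsilon^2/\Vert c\Vert_2^2$ guarantees that no wrap-around mod $q$ corrupts the Gaussian tail. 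Applying this row by row, and then extending to the joint distribution via independence of the underlying $e_j^y$'s, gives that $e^\rho$ is statistically close to $\mathcal{N}_{\mathbb{Z}_q}\!\left(0,\,\sigma^2 M M^\top\right)$, so $\Sigma_\rho = \sigma^2 M M^\top$ does the job.

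The main obstacle, and the step that truly needs care, is the multivariate extension: the cited smoothing-parameter lemma is usually stated for a single linear combination, whereas here the components of $e^\rho$ share common atomic noise samples $e_j^y$ and are therefore correlated. I would address this by applying the lemma to arbitrary integer test vectors $t^\top e^\rho = (M^\top t)^\top \mathbf{e}^y$ and using the fact that statistical closeness of all one-dimensional projections, combined with the bound on $\Vert M \Vert$ implicit in Assumption~\ref{asm:noise_var}, suffices to conclude statistical closeness of the joint discrete law to $\mathcal{N}_{\mathbb{Z}_q}(0,\Sigma_\rho)$. A minor ancillary point is to verify that $M$ does not accidentally annihilate the noise in some coordinate (which would make the corresponding variance zero and the Gaussian approximation vacuous); this is ruled out because the ``$e_k^y$'' contribution appears with coefficient one in $e_k^\rho$, so the diagonal of $M M^\top$ is strictly positive.
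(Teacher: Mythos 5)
Your proposal is correct and follows essentially the same route as the paper's (one-sentence) proof: write the residual noise as an integer-linear transformation $e^\rho = M\,\mathbf{e}^y$ of the atomic LWE noises --- your $M$ is exactly the paper's $T_c$ from~\eqref{eq:controller_estimates_output} and~\eqref{eq:transform_encrypt} --- and invoke Assumption~\ref{asm:noise_var} together with the smoothing-parameter result to conclude closeness to a discrete Gaussian with covariance $\sigma^2 M M^\top$, matching the paper's $\Sigma_p=\sigma^2 T_c^\top T_c$. You actually supply more detail than the paper does, in particular by flagging the multivariate/correlation issue that the paper's proof silently glosses over.
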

\begin{proof}
    The result follows from choosing a level of the noise so that Assumption~\ref{asm:noise_var} holds, Lemma~\ref{lem:ishomomorphic}, and that we have only considered linear combinations when computing the residual in~\eqref{eq:controller_estimates_output}, which we for now denote by the matrix $T_c$.
\end{proof}

The resulting covariance matrix $\Sigma_p$, depends on the operations done to the signal due to the controller, which we write as $\Sigma_p=\sigma^2T_c^\top T_c$. It's precise formulation will be explicitly derived in Section~\ref{sec:controllers}. Knowing $T_c$ is important here, as it is integral to the detection algorithm. In particular, Proposition~\ref{prop:error_dist} leads us to the following result.
\begin{cor}~\label{cor:distribution_residual_weighted}
    The non-zero message in~\eqref{eq:filtered_ensemble} is distributed as a discrete normal, $d^\top\rho^e \bmod q =(0,X)$, where:
    \begin{equation}\label{eq:normal}
         X \sim \mathcal{N}_{\mathbb{Z}_q}(rd^\top \rho, d^\top \Sigma_p d),
    \end{equation}
    within a statistical distance $\mathcal{O}(\epsilon)$ if $\Vert T_c^\top d\Vert_2^2 \eta_\epsilon^2(\mathbb L) \leq \sigma^2$.
\end{cor}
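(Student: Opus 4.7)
The plan is to treat the corollary as a direct quantitative refinement of Proposition~\ref{prop:error_dist}: once the aggregated noise $e^\rho$ is known to be approximately discrete Gaussian, it only remains to characterise the univariate distribution obtained by inner-producting with $d$ and adding the deterministic shift $rd^\top\rho$. Since $d\in \Ker_q\mathcal{P}^\rho$ annihilates the first component of the filtered ciphertext by construction (see~\eqref{eq:filtered_ensemble}), it suffices to analyse the scalar $X = rd^\top\rho + d^\top e^\rho$.

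First I would exploit the explicit covariance structure stated just after Proposition~\ref{prop:error_dist}, namely $\Sigma_p = \sigma^2 T_c^\top T_c$. This lets me represent $e^\rho$, within statistical distance $\mathcal{O}(\epsilon)$, as $T_c \tilde e$ with $\tilde e$ an i.i.d.\ vector of $\mathcal{N}_{\mathbb{Z}_q}(0,\sigma^2)$ samples. Substituting yields $d^\top e^\rho = (T_c^\top d)^\top \tilde e$, a signed integer combination of i.i.d.\ discrete Gaussians with coefficient vector $T_c^\top d$. A routine moment computation then gives mean $0$ and variance $\sigma^2\Vert T_c^\top d\Vert_2^2 = d^\top \Sigma_p d$, matching the target parameters in~\eqref{eq:normal} after translating the mean by $rd^\top\rho$.

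The main obstacle is upgrading this moment match into a \emph{distributional} statement, because linear combinations of independent discrete Gaussians are in general \emph{not} themselves discrete Gaussian; the conclusion only holds when the per-coordinate standard deviation dominates the smoothing parameter of $\mathbb{L}$, scaled by the coefficient vector. Here I would invoke the discrete Gaussian convolution/smoothing lemma of~\cite{boneh2011} (the same tool underpinning Assumption~\ref{asm:noise_var}): provided $\sigma \geq \Vert T_c^\top d\Vert_2\,\eta_\epsilon(\mathbb{L})$, which is precisely the hypothesis $\Vert T_c^\top d\Vert_2^2 \eta_\epsilon^2(\mathbb{L}) \leq \sigma^2$, the sum $(T_c^\top d)^\top \tilde e$ is within statistical distance $\mathcal{O}(\epsilon)$ of $\mathcal{N}_{\mathbb{Z}_q}(0,\sigma^2\Vert T_c^\top d\Vert_2^2)$. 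Composing this with the $\mathcal{O}(\epsilon)$ error incurred when approximating $e^\rho$ by $T_c\tilde e$, via the triangle inequality on statistical distance, keeps the overall error at $\mathcal{O}(\epsilon)$.

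Finally, I would verify that reduction modulo $q$ does not distort the Gaussian tails. The upper bound in Assumption~\ref{asm:noise_var}, applied with coefficient vector $c = T_c^\top d$, guarantees that the mass of the sum falling outside $[-q/2,q/2)$ is also $\mathcal{O}(\epsilon)$, so the $\bmod\, q$ operation acts as the identity on a set of measure $1-\mathcal{O}(\epsilon)$. Adding the deterministic constant $rd^\top\rho$ then merely translates the mean to the value claimed in~\eqref{eq:normal}, identifying the nontrivial component $X$ with $\mathcal{N}_{\mathbb{Z}_q}(rd^\top\rho,\, d^\top\Sigma_p d)$ up to the stated statistical distance.
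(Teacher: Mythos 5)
Your proposal is correct and follows essentially the same route as the paper: the paper proves the corollary implicitly through Proposition~\ref{prop:error_dist}, i.e.\ by noting that the residual noise is the linear image $T_c$ of i.i.d.\ discrete Gaussians, invoking the smoothing/convolution result of~\cite{boneh2011} under the condition $\Vert T_c^\top d\Vert_2^2\,\eta_\epsilon^2(\mathbb L)\leq\sigma^2$, and relying on Assumption~\ref{asm:noise_var} to keep the $\bmod\,q$ overflow at $\mathcal{O}(\epsilon)$. You have simply made explicit the steps the paper leaves terse, including the variance identity $\sigma^2\Vert T_c^\top d\Vert_2^2=d^\top\Sigma_p d$ and the triangle inequality on statistical distance.
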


Corollary~\ref{cor:distribution_residual_weighted} shows us how to perform the hypothesis test we posed in Problem~\ref{prob:main}, where we assumed the ideal case of a plaintext residual being zero at all times:
\begin{thm}\label{thm:main_theorem_rejection_rule}
    Rejecting $\mathcal{H}_0$ when $ \vert X \vert \geq \gamma$, where $d^\top\rho^e \bmod q =(0,X)$ and $\gamma$ is the $\alpha$-quantile of $\mathcal{N}_{\mathbb{Z}_q}(0,d^\top \Sigma_p d)$,
    \begin{equation}  \label{eq:false_alarm_triggering_rule}
    \mathrm{Pr}(\vert Y \vert\geq \gamma)=\alpha, \quad \text{for } Y \sim \mathcal{N}_{\mathbb{Z}_q}(0,d^\top \Sigma_p d),
    \end{equation}
    produces a Type~I Error with probability $\alpha$ with a small statistical distance of $\mathcal{O}(\epsilon)$ if $\Vert T_c^\top d\Vert_2^2 \eta_\epsilon^2(\mathbb L) \leq \sigma^2$.
\end{thm}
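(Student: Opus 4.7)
\emph{Plan of proof.} The strategy is to specialize Corollary~\ref{cor:distribution_residual_weighted} to the null hypothesis and then invoke the definition of $\gamma$. First I would observe that under $\mathcal{H}_0$ we have $\dec^\lambda_s(\rho_k^e) = 0$ for every $k$, which by Lemma~\ref{lem:succ_decrypt} implies $\rho_k = 0$ in plaintext for all $k$. Consequently $d^\top \rho = 0$, and the mean of the discrete-Gaussian description in~\eqref{eq:normal} collapses to zero.

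Next, I would directly apply Corollary~\ref{cor:distribution_residual_weighted}: the non-zero coordinate $X$ in $d^\top \rho^e \bmod q$ is distributed as $\mathcal{N}_{\mathbb{Z}_q}(0, d^\top \Sigma_p d)$ within statistical distance $\mathcal{O}(\epsilon)$, precisely because the hypothesis $\Vert T_c^\top d\Vert_2^2 \eta_\epsilon^2(\mathbb L) \leq \sigma^2$ is the smoothing-parameter requirement needed to invoke the corollary. The modular-arithmetic ``wrap-around'' does not distort this conclusion since Assumption~\ref{asm:noise_var}, via the factor $\Omega_\epsilon$, keeps the probability mass outside $[-q/2, q/2)$ at the same $\mathcal{O}(\epsilon)$ order and can be absorbed into the statistical distance.

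Finally, I would compare the rejection-region probability to the target. By the construction of $\gamma$ in~\eqref{eq:false_alarm_triggering_rule}, an idealized variable $Y \sim \mathcal{N}_{\mathbb{Z}_q}(0, d^\top \Sigma_p d)$ satisfies $\mathrm{Pr}(\vert Y \vert \geq \gamma) = \alpha$ exactly. Since the total-variation distance between the law of $X$ and the law of $Y$ is $\mathcal{O}(\epsilon)$, the probability of any measurable event, including $\{\vert X \vert \geq \gamma\}$, differs from the corresponding event for $Y$ by at most $\mathcal{O}(\epsilon)$. This yields the claimed Type~I error bound.

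The only substantive subtlety, rather than a genuine obstacle, is bookkeeping the two separate sources of the $\mathcal{O}(\epsilon)$ deviation, namely the smoothing-parameter approximation of the continuous Gaussian by the discrete Gaussian and the probability leakage at the $\pm q/2$ boundary, and verifying that both are controlled by the same $\epsilon$ chosen in Assumption~\ref{asm:noise_var}. Once that is in hand, the proof reduces to a one-line invocation of Corollary~\ref{cor:distribution_residual_weighted} followed by the definition of the quantile $\gamma$.
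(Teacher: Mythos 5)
Your proposal is correct and follows essentially the same route as the paper's proof: specialize Corollary~\ref{cor:distribution_residual_weighted} to the null hypothesis (where $d^\top\rho=0$ collapses the mean), then invoke the quantile definition of $\gamma$ so that the rejection event has probability $\alpha$ under the resulting distribution. Your explicit total-variation bookkeeping of the two $\mathcal{O}(\epsilon)$ sources is a slightly more careful rendering of what the paper leaves implicit in the phrase ``has the same distribution as $Y$ under $\mathcal{H}_0$,'' but it is not a different argument.
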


\begin{proof}
Recall that the Type~I Error is defined as $\mathrm{Pr} (\text{Reject } \mathcal{H}_0 | \mathcal{H}_0 \text{ is true})=\alpha$. Thus, we have to construct a rejection mechanism with this property. Hypothesis $\mathcal{H}_0$ implies that $d^\top \rho=0$, which according to Corollary~\ref{cor:distribution_residual_weighted}, implies that:
    \begin{equation*} 
        X \sim \mathcal{N}_{\mathbb{Z}_q}(0, d^\top \Sigma_p d).
    \end{equation*}
    Consider the random variable $Y\sim \mathcal{N}_{\mathbb{Z}_q}(0, d^\top \Sigma_p d)$ and the sets $\mathcal{S}_i$, for which $\mathrm{Pr}(Y\subset \mathcal{S}_i)=\alpha$. Multiple such sets may exist, but we choose to use $\mathcal{S}_0={Y: Y<-\gamma, \text{ or }  Y > \gamma}$, for some $\gamma\in \mathbb{Z}_q$. Thus, we have that $\mathrm{Pr}(|Y|\geq \gamma)=\alpha$. Since $d^\top\rho^e \bmod q=X$ has the same distribution as $Y$ under $\mathcal{H}_0$, we have that: $$\mathrm{Pr}\left (\left. |X|\geq \gamma \right \vert \mathcal{H}_0 \text{ is true} \right )=\alpha,$$ which concludes our proof.
\end{proof}
\begin{rem}
    Several rejection rules that solve Problem~\ref{prob:main} appear in the proof of Theorem~\ref{thm:main_theorem_rejection_rule}, quantified by the number of sets $\mathcal{S}_i$ we can choose from. These sets are equivalent to how we could pick different rejection regions in Section~\ref{sec:hyptest}. The one we use in Theorem~\ref{thm:main_theorem_rejection_rule} is the \emph{uniformly most powerful unbiased} $\alpha$-level test~\cite{Lehmann2005}.
\end{rem}

To extend the detection mechanism to solve Problem~\ref{prob:secondary}, we return to the last line of~\eqref{eq:filtered_ensemble}, where the plaintext residual is given by $rd^\top \rho$. Since the detector does not know $\rho$, we take a worst-case approach to $\rho$'s values, even if the vector is sparse. We shall therefore approximate:
\begin{equation*}
    rd^\top \rho \approx r Y, \text{ where } Y\sim\mathcal{U}_{\epsilon_\infty \Vert d_{\epsilon_0} \Vert_1},
\end{equation*}
where $d_{\epsilon_0}$ is a vector consisting of $d$'s $\epsilon_0$ largest values. The motivation is as follows, if it is known that $\Vert \rho \Vert_0 \leq \epsilon_0$, then the largest values $d^\top \rho$ can take is $\pm \epsilon_\infty \Vert d_{\epsilon_0} \Vert_1$. In case there is no sparsity, $\epsilon_0 = Np$, then $d_{\epsilon_0}=d$. With this approximation of the error, we have:
\begin{prop}\label{prop:prop_rejection_rule}
    Rejecting $\mathcal{H}^e_0$ when $  X \in \mathcal{S}$, where $d^\top\rho^e \bmod q =(0,X)$ and $\mathcal{S}$ is a set where $\mathrm{Pr}(Y \in \mathcal{S})=\alpha$ for the random variable $Y$ with the following density function
    \begin{equation*}
        \mathrm{Pr}(y=Y) = \frac{1}{1+2 \epsilon_\infty \Vert d_{\epsilon_0} \Vert_1} \sum \limits_{k=-\epsilon_\infty \Vert d_{\epsilon_0} \Vert_1}^{\epsilon_\infty \Vert d_{\epsilon_0} \Vert_1} \mathrm{Pr}(y-kr=X),
    \end{equation*}
    where $\mathrm{Pr}(x=X)$ is the density function for $\mathcal{N}_{\mathbb{Z}_q}(0,d^\top \Sigma_p d)$, produces a Type~I Error with probability $\alpha$ with a small statistical error of $\mathcal{O}(\epsilon)$.
\end{prop}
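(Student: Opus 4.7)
The plan is to adapt the proof of Theorem~\ref{thm:main_theorem_rejection_rule} by allowing the plaintext term $rd^\top \rho$ in the filtered sample~\eqref{eq:filtered_ensemble} to be nonzero but bounded, and then absorbing its unknown value into a worst-case mixture distribution that reduces the hypothesis test to the same form as in Theorem~\ref{thm:main_theorem_rejection_rule}.

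The first step is a deterministic bound on the offset. Under $\mathcal{H}_0^e$ the residual $\rho$ has at most $\epsilon_0$ nonzero entries, each of magnitude at most $\epsilon_\infty$. A Hölder-type estimate then gives $|d^\top \rho| \leq \epsilon_\infty \Vert d_{\epsilon_0}\Vert_1$, with equality attained when the support of $\rho$ coincides with the $\epsilon_0$ coordinates where $|d_i|$ is largest and the signs of $\rho$ match those of $d$. Hence the integer $d^\top \rho$ lies in $\{-\epsilon_\infty\Vert d_{\epsilon_0}\Vert_1,\ldots,\epsilon_\infty\Vert d_{\epsilon_0}\Vert_1\}$.

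Next, mirroring the ``worst-case $Y\sim\mathcal{U}_{\epsilon_\infty\Vert d_{\epsilon_0}\Vert_1}$'' heuristic already introduced just before the statement, I would model the unknown deterministic offset as a uniform random variable $K$ on this admissible integer set. Conditionally on $K = k$, Corollary~\ref{cor:distribution_residual_weighted} gives that the filtered sample is distributed within statistical distance $\mathcal{O}(\epsilon)$ as $\mathcal{N}_{\mathbb{Z}_q}(rk,\, d^\top\Sigma_p d)$. Marginalizing over $K$ produces the mixture density $\Pr(Y=y) = \sum_k \Pr(K=k)\,\Pr(Z = y - rk)$ with $Z \sim \mathcal{N}_{\mathbb{Z}_q}(0, d^\top\Sigma_p d)$; substituting $\Pr(K=k) = 1/(1+2\epsilon_\infty\Vert d_{\epsilon_0}\Vert_1)$ recovers exactly the expression in the statement.

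Finally, exactly as in Theorem~\ref{thm:main_theorem_rejection_rule}, choosing $\mathcal{S}$ so that $\Pr(Y \in \mathcal{S}) = \alpha$ under this mixture yields a Type~I error of $\alpha + \mathcal{O}(\epsilon)$. The main obstacle is interpretive rather than computational: the true $d^\top \rho$ is a fixed, unknown integer, not a uniform random variable, so one has to justify the use of the mixture as the reference distribution. I would argue that the uniform law is the maximum-entropy prior on the admissible set and is therefore the most conservative choice available to the detector without further side information; this gives an average Type~I error of $\alpha$ across the admissible values of $\rho$, which is the natural counterpart of the exact null-case guarantee of Theorem~\ref{thm:main_theorem_rejection_rule}. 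A pointwise worst-case interpretation would require either tightening $\mathcal{S}$ or further restricting $\rho$, and I would flag that as an orthogonal extension rather than part of the current claim.
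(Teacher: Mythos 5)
Your proposal follows essentially the same route as the paper: the paper's proof simply observes that $rd^\top\rho + d^\top e^\rho$ is (approximately) a sum of a discrete normal and a uniform variable on $\{-\epsilon_\infty\Vert d_{\epsilon_0}\Vert_1,\dots,\epsilon_\infty\Vert d_{\epsilon_0}\Vert_1\}$ and obtains the stated density as the convolution of the two, which is exactly your mixture-over-$K$ marginalization. Your closing caveat --- that $d^\top\rho$ is a fixed unknown integer rather than a uniform random variable, so the guarantee is really an average rather than a pointwise Type~I error --- is a fair point that the paper itself glosses over with the phrase ``a random variable that we approximate as uniformly distributed.''
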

\begin{proof}
    Since $r d^\top \rho + d^\top e^\rho$ is a sum of a random variable sampled from a discrete normal distribution (up to $\mathcal{O}(\epsilon)$ statistical distance) and a random variable that we approximate as uniformly distributed, $\mathcal{U}_{\epsilon_\infty \Vert d_{\epsilon_0} \Vert_1}$, One can obtain the resulting density function by computing the convolution between the two density functions. 
\end{proof}
Similarly to the proof in Theorem~\ref{thm:main_theorem_rejection_rule}, multiple sets $\mathcal{S}$ may exist. One obtains the \emph{uniformly most powerful} $\alpha$-level test by picking $\mathcal{S}$ with the largest cardinality. While the largest set in the error-free case consists of points far away from the origin, the same may not hold for the case when there are errors.

Theorem~\ref{thm:main_theorem_rejection_rule} and Proposition~\ref{prob:secondary} gives rules for performing a hypothesis test once some vector $d$ has been found. 
While the hypothesis test depends on obtaining good filtering vectors $d$ on the cryptographic side of the problem, $d$ essentially only scales the difficulty of performing the test and a new $d$ will be obtained once a new sample is introduced into the problem. In particular, the error variance $d^\top \Sigma_\rho d$ of the discrete normal variance is scaled quadratically with $d$, and its $\epsilon_0$ largest values linearly scale the impact of quantization errors. In contrast, the system parameters $\Sigma_p$, $\epsilon_\infty$ , and $\epsilon_0$ from the control-theoretic side determine the system's best-case detection capabilities.

We also need to recognize the temporal nature of the problem. It is most valuable if anomaly is detected in the \emph{most recent samples}. The samples that are ignored are encoded in the filtering $d$ through its elements that are zero. Therefore, we want to ensure that the most recent samples are used in the detection scheme. Therefore, we want the filtering vector $d$ to have the following two properties:
\begin{enumerate}
    \item Structured: The first elements in $d$ should be 0 so that only recent samples are used for detection.
    \item Short: The norms $\Vert d_{\epsilon_0} \Vert_1$ and $\Vert d\Vert_2$ should be small to reduce error amplification.
\end{enumerate}
In Section~\ref{sec:vectors}, we show that these two properties can oppose each other and that a trade-off between them has to be considered.


\section{Hypothesis Testing under LWE}\label{sec:hyptest}
The proposed detection scheme seems, at first glance, like it violates the security properties of LWE, in particular since we still are dependent on finding a short vector. In this section, we will clarify the difference between the two problems, and show that the problem related to detection actually scales polynomially with the length of the shortest vector, as opposed to exponentially. Hypothesis tests are used to exemplify the hardness of the LWE problem in the original work by Regev~\cite{regev2009}. In particular, it is shown that if an efficient algorithm can reject that a set of samples comes from the uniform distribution in favor of a so-called \emph{LWE distribution}, see Definition~\ref{def:lwedist} below, then the secret vector $s$ can also be recovered efficiently. However, the existence of such a rejection algorithm would imply that the LWE problem is easy to solve. Let us now define what we mean by the LWE distribution.

The uniform distribution $\mathcal{U}_q$ can be combined with the discrete normal distribution from Definition~\ref{def:discretenormal} to make the LWE distribution as follows.
\begin{defi} \label{def:lwedist}
A sample from the LWE distribution $\mathcal{L}_{v,q,\sigma^2}(s)$ with parameters $v$, $q$, and $\sigma^2$ is defined as a tuple, $(P,b)$, where the elements of $P \in \mathbb Z_q^{v}$ are sampled from $\mathcal{U}_q$, and $b$ is given by,
\begin{equation*}
    b \coloneqq Ps+e \bmod q,
\end{equation*}
where $e \sim \mathcal{N}_{\mathbb{Z}_q}(0,\sigma^2)$. The vector $s \in \mathbb Z_q^v$ parametrizes the specific LWE-distribution.
\end{defi}

Comparing to the encryption scheme in Definition~\ref{def:lwe}, and specifically~\eqref{eq:lweencmsg}, one can see that a sample from the LWE distribution is the same as encrypting the plaintext message $m=0$. In particular, we can identify the null hypothesis $\mathcal{H}_0$ in Problem~\ref{prob:main} as being equivalent to having a collection of samples from the LWE distribution,
\begin{itemize}
    \item $\mathcal{H}_0$: $\rho_k^e\sim \mathcal{L}_{v,q,\sigma^2}(s)$, $\forall k$.
\end{itemize}

The test we used in Section~\ref{sec:results} finds integer-linear combinations of messages so that the result, after modulo $q$, removes the explicit dependence on $P_k^\rho$,
\begin{equation*}
    d^\top \rho^e \bmod q = (0, d^\top e \bmod q).
\end{equation*}
Now, if the result is sufficiently small, namely if $\vert d^\top e \vert < \frac{q}{2}$, we will have an integer-linear combination of samples distributed as discrete normal which, thanks to Assumption~\ref{asm:noise_var}, is also distributed ($\epsilon$-close) as a discrete normal, see~\cite{boneh2011}. Therefore, we can convert the null hypothesis to:
\begin{itemize}
    \item $\mathcal{H}_0$: $d^\top \rho^e \bmod q = (0,X)$, where $X\sim \mathcal{N}_{\mathbb{Z}_q}(0,\Vert d\Vert_2^2\sigma^2)$. 
\end{itemize}
In other words, the null hypothesis is that $X$ is sampled from the discrete normal distribution.

Consider now the alternative hypothesis $\mathcal{H}_1$, which assumes that (at least some of) the plaintext residual samples are non-zero. Performing the same integer-linear combination of samples yields
\begin{equation}\label{eq:uniform_result}
    d^\top \rho^e \bmod q = (0, d^\top \rho r + d^\top e \bmod q),
\end{equation}
where $\rho$ is some strictly non-zero vector. The effect of the non-zero $\rho r$, especially if it is sufficiently large, can be understood by comparing the second tuple in~\eqref{eq:uniform_result} with the following class of pseudo-random uniform number generators, see~\cite{rotenberg60}, which outputs sequences that pass formal test of randomness, however, can be predicted if the parameters are known.

\begin{lem}\label{lem:makeUnif}
    Consider the two integers $a\in\mathbb Z_q$ and $c \in \mathbb Z_q$ for prime $q$ and let $X \sim \mathcal{U}_q$. Then, the following affine transformation:
    \begin{equation} \label{eq:uniformize}
        Y= aX+c \bmod q
    \end{equation}
    is also pseudo-uniform, $Y \sim \mathcal{U}_{q}$, if $a$ and $q$ are coprime.
\end{lem}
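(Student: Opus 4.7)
The plan is to reduce the claim to the standard fact that pushing forward a uniform distribution through a bijection on a finite set yields a uniform distribution. Concretely, I would show that the affine map $\varphi(x) := ax + c \bmod q$ is a bijection of $\mathbb{Z}_q$ onto itself, and then transport the uniformity of $X$ through $\varphi$ to obtain the uniformity of $Y$.

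First I would establish invertibility of $\varphi$. Since $q$ is prime and $\gcd(a,q)=1$ (equivalently $a \not\equiv 0 \pmod q$), the element $a$ is a unit in $\mathbb{Z}_q$, with some multiplicative inverse $a^{-1}$. Consequently, for every $y \in \mathbb{Z}_q$ the congruence $y \equiv ax + c \pmod q$ admits the unique solution $x \equiv a^{-1}(y-c) \pmod q$. This provides a two-sided inverse $\varphi^{-1}(y) = a^{-1}(y-c) \bmod q$, so $\varphi$ is a bijection of the finite set $\mathbb{Z}_q$.

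Second, I would compute the probability mass function of $Y$. For each $y \in \mathbb{Z}_q$, bijectivity yields $\{x \in \mathbb{Z}_q : \varphi(x) = y\} = \{\varphi^{-1}(y)\}$, and hence $\Pr(Y = y) = \Pr(X = a^{-1}(y-c) \bmod q) = 1/q$, using $X \sim \mathcal{U}_q$ in the last equality. Since this holds for every $y \in \mathbb{Z}_q$, I conclude $Y \sim \mathcal{U}_q$.

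I do not anticipate any real obstacle: the lemma is elementary once invertibility of $a$ modulo $q$ is noted. The only mildly delicate point is interpreting the hypothesis — for prime $q$, the coprimality condition is automatically satisfied whenever $a \not\equiv 0 \bmod q$, so stating it explicitly really just rules out the degenerate case $a = 0$ in $\mathbb{Z}_q$, for which $Y$ would collapse to the constant $c$ rather than be uniform. It may also be worth remarking that the argument goes through verbatim for any $q$ (not just prime) as long as $\gcd(a,q)=1$, which is precisely what is needed for $a$ to be a unit in $\mathbb{Z}_q$; primality is used only to make the coprimality condition nearly automatic.
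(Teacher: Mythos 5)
Your proof is correct, and the paper actually states this lemma without an explicit proof; your bijection/pushforward argument (invertibility of $a$ modulo prime $q$, hence $\Pr(Y=y)=\Pr(X=a^{-1}(y-c)\bmod q)=1/q$) is exactly the standard justification the authors implicitly rely on, and the same coprimality fact reappears in their proof of the subsequent Lemma~\ref{lem:sum_pseudo_unif}. Your remarks that primality only serves to rule out $a=0$ and that the argument works for any $q$ with $\gcd(a,q)=1$ are also accurate.
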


Essentially, the information about $a$ and $c$ is destroyed because they cannot be recovered from $Y$ even if the realization of $X$ is known. We can, therefore, identify the second tuple in~\eqref{eq:uniform_result} as a sum of pseudo-uniform random number generators with different parameters $a_k=\rho_kr$ and $c_k=d_ke_k$,
\begin{equation*}
   d^\top \rho r + d^\top e \bmod q =  \left ( \sum \limits_k d_k\rho_kr + d_ke_k \bmod q \right) \bmod q.
\end{equation*}
The final modulo operation maps the sum of pseudo-uniform random numbers over $\mathbb Z_q$ back to $\mathbb Z_q$, which follows a pseudo-uniform distribution again, as is shown next.
\begin{lem}
\label{lem:sum_pseudo_unif}    Consider a set of messages $u_k$ sampled from the uniform distribution, $u_k \sim \mathcal{U}_{q}$ and let $q$ be prime. Then, the linear combination
    \begin{equation*}
        U \coloneqq \sum \limits_k a_k u_k \bmod q,
    \end{equation*}
    where $a_k \in \mathbb Z_q$ and $a_k \neq 0$ for some $k$, is also uniform, $U \sim  \mathcal{U}_{q}$.
\end{lem}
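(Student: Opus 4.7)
The plan is to reduce the statement to Lemma~\ref{lem:makeUnif} by isolating a single term in the sum whose coefficient is nonzero. Since at least one $a_k$ is nonzero by hypothesis, I would pick such an index $k^*$ and split
\begin{equation*}
    U = a_{k^*} u_{k^*} + Z \bmod q, \qquad Z \coloneqq \sum_{k \neq k^*} a_k u_k \bmod q,
\end{equation*}
noting that $Z$ is independent of $u_{k^*}$ because the $u_k$ are independent samples from $\mathcal{U}_q$.

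Next, I would argue conditionally on $Z$. Fix any realization $Z = z \in \mathbb{Z}_q$. Because $q$ is prime and $a_{k^*} \neq 0$ in $\mathbb{Z}_q$, the coefficient $a_{k^*}$ is coprime to $q$. Lemma~\ref{lem:makeUnif}, applied to the affine transformation $x \mapsto a_{k^*} x + z \bmod q$ of the uniform variable $u_{k^*}$, then gives that the conditional distribution $U \mid Z = z$ is $\mathcal{U}_q$. Since this holds for every $z$, the law of total probability yields that the marginal distribution of $U$ is also $\mathcal{U}_q$.

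I expect the proof itself to be essentially bookkeeping rather than obstacle-heavy, but the one subtlety worth pinning down is the use of primality of $q$: it is needed precisely to guarantee $\gcd(a_{k^*}, q) = 1$ so that Lemma~\ref{lem:makeUnif} applies. Without primality one would need the weaker condition $\gcd(a_{k^*}, q) = 1$ for some $k^*$ explicitly, which the statement does not assume. The alternative route of computing characteristic functions on the group $\mathbb{Z}_q$ (every nontrivial character of the uniform distribution vanishes, hence the product of characters vanishes, hence the sum is uniform) would give the same conclusion and might be preferable if one later wanted to drop the primality assumption, but within the scope of the paper the conditional argument above is the cleanest.
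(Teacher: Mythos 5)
Your proof is correct, and it takes a somewhat different route from the paper's. The paper proves the lemma by recurrence: it first notes that $u_i+u_j \bmod q$ is uniform for independent uniforms, that $a_k u_k \bmod q$ is uniform whenever $a_k\neq 0$ (coprimality with the prime $q$), and then builds the partial sums $S_{k+1}=S_k+a_{k+1}u_{k+1} \bmod q$ inductively, arguing each $S_k$ is uniform. You instead isolate a single term with nonzero coefficient, condition on the independent remainder $Z$, and invoke Lemma~\ref{lem:makeUnif} directly for the affine map $x\mapsto a_{k^*}x+z$. Your version is arguably the cleaner match to the statement as written: the hypothesis only guarantees $a_k\neq 0$ for \emph{some} $k$, and the paper's induction glosses over the terms with $a_k=0$ (for which $a_k u_k \bmod q$ is degenerate at $0$, not uniform; the induction still goes through because adding an independent degenerate variable to a uniform modulo $q$ preserves uniformity, but the paper does not say this). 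Your conditioning argument sidesteps that issue entirely by never requiring the other terms to be uniform, and it makes the role of primality explicit in exactly the one place it is needed. The paper's inductive route, on the other hand, avoids conditional distributions and stays at the level of the two elementary closure facts it states. Both are valid; your observation about the character-sum alternative for non-prime $q$ is a reasonable aside but not needed here.
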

\begin{proof}
    First, note that the sum of two uniformly distributed variables $u_i, u_j \sim  \mathcal{U}_{q}$, is also uniform after the modulus operator has been applied,
    \begin{equation*}
        u_i+u_j \bmod q \sim  \mathcal{U}_{q}, \quad \text{ if } i\neq j.
    \end{equation*}
    Secondly, multiplying a uniform number $u_k$ with an integer $a_k \in \mathbb Z_q$ is also integer,
    \begin{equation*}
        a_k u_k \bmod q \sim  \mathcal{U}_{q},
    \end{equation*}
    if $a_k \neq 0$, since $a_k$ and $q$ are coprime. The last part of the proof follows from recurrence. Let us define
    \begin{equation*}
        S_{k+1} = S_k + a_{k+1}u_{k+1} \bmod q, \quad S_1 = a_1u_1 \bmod q.
    \end{equation*}
Note that since $S_k$ and $a_{k+1}u_{k+1} \bmod q$ are both uniform, then their sum modulo $q$, $S_{k+1}$, is uniform too.
\end{proof}


The uniformity of~\eqref{eq:uniform_result} allows us to reformulate Problem~\ref{prob:main} into the following one, which is the problem we solve when performing anomaly detection.
\setcounter{prob}{1}
\begin{subprob} \label{prob:reworked_prob}
    Consider the following two hypotheses:
    \begin{itemize}
        \item $\mathcal{H}_0$: $d^\top \rho^e \bmod q = (0,X)$, where $X\sim \mathcal{N}_{\mathbb{Z}_q}(0,\Vert d\Vert_2^2\sigma^2)$. 
        \item $\mathcal{H}_1$: $d^\top \rho^e \bmod q = (0,U)$, where $U\sim \mathcal{U}_q$. 
    \end{itemize}
    Reject $\mathcal{H}_0$ in favor of $\mathcal{H}_1$ with a specific probability $\alpha$ on the Type~I Error.
\end{subprob}


In the following, we shall explain the subtle difference between Problem~\ref{prob:reworked_prob} and the one treated in~\cite{regev2009}, which reveals the secret vector. 
Consider a collection of samples from the LWE distribution, denoted by $(P_k, P_k s + e_k \bmod q)$, where $k$ is a label for each sample. Then, draw an integer from the uniform distribution for each sample, $p_k \sim \mathcal{U}_{q}$ and modify each sample as follows.
\begin{align*}
    & (P_k+p_k \begin{bmatrix}
        1 & 0 & \dots & 0 
    \end{bmatrix} \bmod q, P_k s + e_k \bmod q) \\ 
    = & (\tilde P_k, P_ks +e_k +p_ks_1 -p_ks_1 \bmod q) \\ 
    = &(\tilde P_k, \underbrace{-s_1}_a \underbrace{p_k}_X + \underbrace{\tilde P_ks +e_k}_{c}  \bmod q).
\end{align*}

The last element in the transformed tuple sample can be compared to the output of the pseudo-random uniform number generator~\eqref{eq:uniformize}, where we can identify the known $p_k$ as $X$, and the unknown parameters $-s_1$ and $\bar P_k + e_k$ as $a$ and $c$, respectively. In other words, we have \emph{transformed our LWE samples into} (pseudo-random) \emph{uniform samples}.


Uncovering the first element of secret vector $s$, namely $s_1$, occurs when a transformation back to the LWE distribution is performed \emph{and verified}. In particular, first make a guess of $s_1$, which we denote as $\tilde s_1$. Then, multiply it with the known integer $p_k$, and add the product to the second part of the tuple
\begin{equation*}
    (\tilde P_k, \underbrace{(\tilde s_1 - s_1)}_a \underbrace{p_k}_X + \underbrace{\tilde P_ks +e_k}_{c}  \bmod q).
\end{equation*}
Note that the sample remains uniform if a wrong guess is made, $\tilde s_1 \neq s_1$. The only thing that changes is the (unknown) parameter $a$ in the pseudo-random number generator. However, if a correct guess is made, $\tilde s_1=s_1$, the samples will no longer be uniform. Rather, they will be LWE samples again. If there is a way to verify that the samples have been transformed back to LWE, then the (first element of the) secret vector has been revealed. Therefore, to reveal elements of the secret vector, the following problem has to be solved.
\begin{regevProblem}
    Consider the two hypotheses in Problem~\ref{prob:reworked_prob}. Reject $\mathcal{H}_1$ (Uniform) in favor of $\mathcal{H}_0$ (LWE) with a specific probability $\alpha$ on the Type~I Error.
\end{regevProblem}

If $\mathcal{H}_1$ can be rejected efficiently (polynomial complexity), then at most $q$ guesses per element of the secret vector are needed, $\mathcal{O}(qv)$, making the entire scheme polynomial.



We summarize the two different problems and their implications in Fig.~\ref{fig:visual_break_LWE}. It is the verification in the second step that reveals the secret vector in~\cite{regev2009}. When we solve Problem~\ref{prob:reworked_prob} to perform anomaly detection, we verify the first step in Fig.~\ref{fig:visual_break_LWE}, which would \emph{not reveal the secret vector}. The rest of this section is devoted to showing that solving Problem~\ref{prob:reworked_prob} can be done more efficiently.

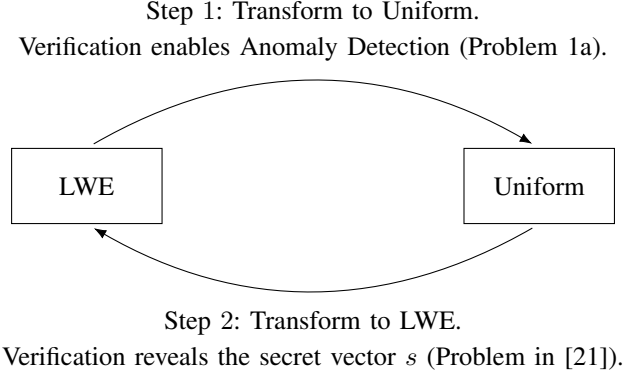
\begin{figure}
    \centering
\begin{tikzpicture}[
  node distance=2cm and 4cm,
  box/.style={draw,minimum width=2cm,minimum height=1cm,align=center},
  arc/.style={-Latex,shorten >=3pt,shorten <=3pt}
  ]
  \node[box] (LWE) {LWE};
  \node[box,right=of LWE] (Uniform) {Uniform};
  \draw[arc] (LWE.north) to[bend left] node[midway, above, yshift=4ex] {Step $1$: Transform to Uniform.} node[midway, above, yshift=1ex]{Verification enables Anomaly Detection (Problem~\ref{prob:reworked_prob}).} (Uniform.north);
\draw[arc] (Uniform.south) to[bend left] node[midway, below, yshift=-1ex] {Step $2$: Transform to LWE.} node[midway, below, yshift=-4ex] {Verification reveals the secret vector $s$ (Problem in~\cite{regev2009}).}  (LWE.south);
\end{tikzpicture}    
    \caption{The schematic shows a high-level picture of how the secret vector is revealed using the algorithm laid out in~\cite{regev2009}. First, the LWE sampled is transformed into a uniform one by modifying the public key, as shown in Step 1 (confirmation that this transformation is successful is not treated in~\cite{regev2009}). Then, by guessing the correct value of the secret vector, one can return the sample back to an LWE sample, as in Step 2. The ability to confirm that the sample is LWE again verifies the guess and thus reveals the secret vector.}
    \label{fig:visual_break_LWE}
\end{figure}

Our scheme transforms the samples into a single, yet equivalent, sample from either a uniform or a discrete normal distribution (if the integer-linear combination $d$ gives a sufficiently small output). In Fig.~\ref{fig:unif_norm}, the uniform and discrete normal distributions are shown. The typical way to design a hypothesis test is to settle for a level of the Type~I Error $\alpha$, the false alarm probability. Then, one chooses a region that minimizes the Type~II Error for the alternative hypothesis and whose probability under the null hypothesis is, at most, the false alarm probability. Typical choices of these regions are shown in Fig.~\ref{fig:unif_norm} for a Type~I Error probability of $5 \%$ when the message space is $q=227$.

\begin{rem}
    Note that Lemmas~\ref{lem:makeUnif}
 and~\ref{lem:sum_pseudo_unif} lend an interpretation to Proposition~\ref{prop:prop_rejection_rule} that solves Problem~\ref{prob:secondary}. The quantity $rd^\top \rho + d^\top e^\rho \bmod q$ for small and sparse $\rho$ becomes a sum of pseudo-uniform random number generators with poorly chosen parameters. An example of poor parameters for~\eqref{eq:uniformize} is $a=1$, $c=1$, and $q\gg1$. Given a set of $Y$ and $X$, one could identify $a$ and $c$ because of the tiny increments.
 \end{rem}
\begin{figure}
    \centering
    \includegraphics[scale=0.45]{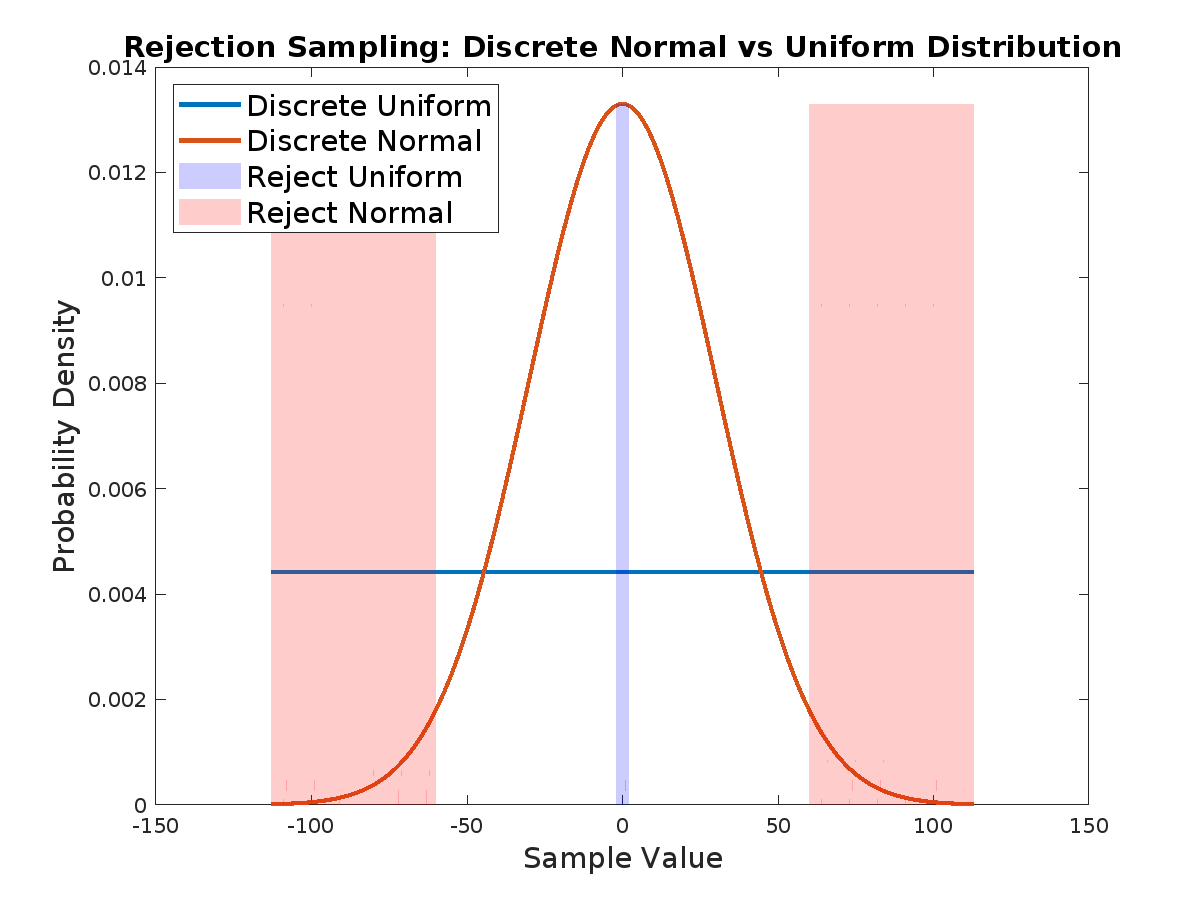}
    \caption{The graph shows the probability distribution for the discrete normal and uniform distributions. Note that rejecting the uniform distribution results in a very small rejection region, whose maximal power is achieved by centering it around the mean of the alternative distribution, which is the discrete normal here. If we design a test to reject the other hypothesis, one may see that the resulting rejection region in red, to reject the discrete normal, is much larger. Specifically, its false-positive probability is minimized by placing the rejection region around the tails.}
    \label{fig:unif_norm}
\end{figure}

When sampling from a uniform distribution, the Type~I Error probability, $\alpha$, remains constant regardless of the chosen rejection region, as long as it contains fewer than $N<\alpha q$ points. However, the Type~II Error is minimized when the rejection region is centered around the mean of the alternative hypothesis, meeting the criteria for a Neyman-Pearson test, thus becoming the uniformly most powerful test~\cite{Lehmann2005}. Despite this, the Type~II Error grows exponentially with the variance and $\Vert d\Vert_2^2$, as indicated by the red curve in Fig.~\ref{fig:t2}. Conversely, when rejecting a discrete normal distribution, the Type~II Error is minimized by maximizing the rejection region's size by concentrating the region on the distribution's tails. However, the rejection region dynamically shrinks as the variance increases, resulting in slower growth of the Type~II Error, as shown by the blue step-curve in Fig.~\ref{fig:t2}. This suggests that rejecting the normal distribution in favor of the uniform distribution is easier than vice versa, which we prove next. 

\begin{figure}
\hspace{-20pt}
\includegraphics[width=0.55\textwidth,trim={0 50 0 50}, clip]{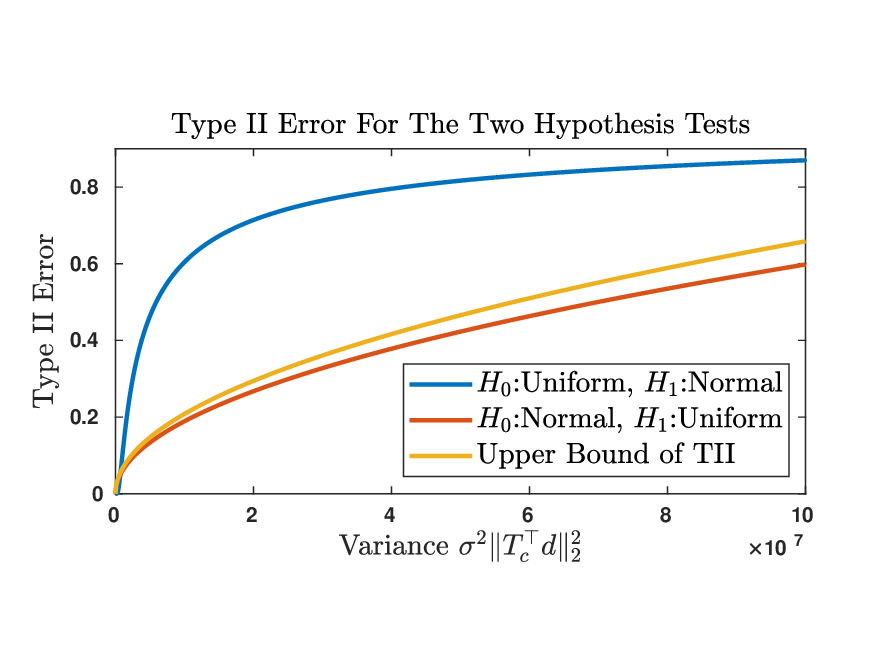}
    \caption{The graph shows the Type II Error of the two hypothesis tests as a function of the resulting variance using $q=10^{16}$ (256-bit security if $v=1024$ and $\sigma^2=10$). Note the exponential convergence of rejecting the uniform distribution (red curve). It indicates that for the test to be statistically significant, an (exponentially) short vector must be found to solve the underlying lattice problem. For the other hypothesis test, used in our anomaly detector, the convergence rate is not exponential, which is proven by the upper bound from Theorem~\ref{thm:bounds}, given by the yellow curve.}
    \label{fig:t2}
\end{figure}


\begin{thm}\label{thm:bounds}
    The Type II Error, $\beta$, for Problem~\ref{prob:reworked_prob} is upper bounded by:
    \begin{equation}\label{eq:thm2_eq_statement}
        \beta^2 \leq -\frac{ 4 \tilde \sigma^2}{a q^2} \log \left (1- \left (1-\alpha \right)^2 \left (1-\mathrm{e}^{-\frac{b Q^2}{ 4 \tilde \sigma^2}} \right ) \right),
    \end{equation}
   where $a=\frac{1}{2}$, $b=\frac{2}{\pi}$, and $Q=2\left (\frac{q-1}{2}\right )^2(v+l)+1$ for some constant $l\geq 1$. In particular, the upper bound increases sublinearly as a function of $\tilde \sigma^2=\sigma^2 \Vert T_c^\top d \Vert_2^2$. 
\end{thm}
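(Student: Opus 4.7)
The plan is to cast Problem~\ref{prob:reworked_prob} as a Neyman--Pearson test and then bound the Type II Error $\beta$ by combining a Pólya-type inequality on the Gaussian CDF with a discretization correction. First, I would observe that the discrete-normal density under $\mathcal{H}_0$ is peaked at $0$ while the uniform density under $\mathcal{H}_1$ is flat, so the likelihood ratio $p_1(x)/p_0(x)$ is monotone in $|x|$. Hence the uniformly most powerful $\alpha$-level rejection region is $R_\gamma = \{x \in \mathbb{Z}_q : |x| \geq \gamma\}$, where $\gamma$ is the threshold determined by $\alpha = \Pr_{Y \sim \mathcal{N}_{\mathbb{Z}_q}(0,\tilde\sigma^2)}(|Y| \geq \gamma)$. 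Under $\mathcal{H}_1$,
\begin{equation*}
\beta = \Pr_{X \sim \mathcal{U}_q}(|X| < \gamma) = \frac{2\gamma-1}{q},
\end{equation*}
so $\beta^2 \leq 4\gamma^2/q^2$, and the task reduces to upper-bounding $\gamma^2$ in terms of $\alpha$ and $\tilde\sigma^2$.

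Second, I would upper-bound $\gamma$ via the Pólya-type lower bound $\mathrm{erf}^2(x) \geq 1 - e^{-x^2}$, which for a continuous $\mathcal{N}(0,\tilde\sigma^2)$ yields $\Pr(|Y|_{\mathrm{cont}} \leq \gamma)^2 \geq 1 - e^{-\gamma^2/(2\tilde\sigma^2)}$, already in the form appearing in the theorem with $a = 1/2$. To transfer this bound to the discrete normal on $\mathbb{Z}_q$, I would compare the normalization constant $Z_q = \sum_{y \in \mathbb{Z}_q} e^{-y^2/(2\tilde\sigma^2)}$ against its continuous counterpart and account for the full pre-reduction support of $d^\top e^\rho$, whose cardinality is at most $Q = 2((q-1)/2)^2 (v+l) + 1$ by a componentwise bound on $|d^\top e^\rho|$. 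Applying the upper Pólya bound $\mathrm{erf}^2(x) \leq 1 - e^{-4x^2/\pi}$ at an argument proportional to $Q/\tilde\sigma$ is where $b = 2/\pi$ enters and produces the correction factor $(1 - e^{-bQ^2/(4\tilde\sigma^2)})$. The combined inequality should read
\begin{equation*}
(1-\alpha)^2 \bigl(1 - e^{-bQ^2/(4\tilde\sigma^2)}\bigr) \leq 1 - e^{-a\gamma^2/\tilde\sigma^2},
\end{equation*}
which, when solved for $\gamma^2$ and substituted into $\beta^2 \leq 4\gamma^2/q^2$, yields exactly~\eqref{eq:thm2_eq_statement}.

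For the sublinearity claim I would analyze the right-hand side $B(\tilde\sigma^2)$ of~\eqref{eq:thm2_eq_statement}. As $\tilde\sigma^2 \to \infty$, $1 - e^{-bQ^2/(4\tilde\sigma^2)} \sim bQ^2/(4\tilde\sigma^2)$ and $-\log(1-z) \sim z$ for small $z$, so $B(\tilde\sigma^2) \to (1-\alpha)^2 bQ^2/(aq^2)$, a constant independent of $\tilde\sigma^2$; at $\tilde\sigma^2 = 0$ the leading $\tilde\sigma^2$ multiplier forces $B \to 0$. Between these extremes, $B$ is concave, and $B(\tilde\sigma^2)/\tilde\sigma^2 \to 0$, giving the claimed sublinear growth that contrasts sharply with the exponential decay required for Regev's dual test.

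The hardest step is rigorously justifying the transfer of the Pólya bound from the continuous to the discrete, mod-$q$ setting. This requires controlling the discrepancy between the continuous integral and the truncated sum $Z_q$, as well as the folding of $d^\top e^\rho$ under mod-$q$ reduction. I would expect to invoke Assumption~\ref{asm:noise_var} (so that $\tilde\sigma$ exceeds the smoothing parameter and the discrete Gaussian is $\epsilon$-close to its continuous analog) and to derive the dimensional factor $(v+l)$ appearing in $Q$ from the lattice $\Ker_q \mathcal{P}^\rho$, with $l$ a slack parameter that keeps the tail correction nontrivial. Once this correction is pinned down, the remainder of the argument is mechanical algebra on the resulting inequality.
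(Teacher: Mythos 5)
Your proposal follows essentially the same route as the paper's appendix: write $\beta$ as the size of the acceptance region divided by $q$, lower-bound the discrete-normal acceptance probability by a ratio of continuous Gaussian integrals, apply the Chu/P\'olya bounds $\sqrt{1-\mathrm{e}^{-x^2}}\leq\mathrm{erf}(x)\leq\sqrt{1-\mathrm{e}^{-4x^2/\pi}}$ (the lower bound on the numerator over $[-s/2,s/2]$, the upper bound on the normalizing integral over $[-Q/2,Q/2]$), and solve for the threshold. The constants $a$ and $b$, the accounting for the support size $Q$ via the $v+l$ nonzero entries of $d$, and the asymptotic analysis all match; your large-$\tilde\sigma^2$ limit $(1-\alpha)^2 bQ^2/(aq^2)$ is in fact tracked more carefully than in the paper.

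Two issues. First, your central displayed inequality points the wrong way. Since $\gamma$ is defined by $\Pr_{\mathcal{H}_0}(|Y|<\gamma)=1-\alpha$ and the P\'olya ratio is a \emph{lower} bound on this acceptance probability, the correct chain is
\begin{equation*}
1-\mathrm{e}^{-a\gamma^2/\tilde\sigma^2}\;\leq\;(1-\alpha)^2\left(1-\mathrm{e}^{-\frac{bQ^2}{4\tilde\sigma^2}}\right),
\end{equation*}
which is what upper-bounds $\gamma^2$ and hence $\beta^2$. As written, your inequality yields a \emph{lower} bound on $\gamma^2$ and cannot produce \eqref{eq:thm2_eq_statement}; the surrounding prose suggests you intended the direction above, but the step fails as stated. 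Second, the step you flag as hardest---transferring the continuous bound to the discrete, mod-$q$ setting---is where the paper spends most of its effort, and it proceeds differently from your plan: rather than invoking the smoothing parameter of Assumption~\ref{asm:noise_var} for this comparison, it establishes the \emph{one-sided} inequalities $S_Q\geq I_Q$ and $f_{\tilde\sigma^2}(s)\geq \int_{-s/2}^{s/2}\mathrm{e}^{-x^2/(2\tilde\sigma^2)}\mathrm{d}x/I_Q$ by an elementary telescoping argument using the convexity of $\mathrm{e}^{-x^2}$ (a trapezoid-type comparison of each summand with the adjacent unit integral), and handles the mod-$q$ folding crudely via $f_{\tilde\sigma^2}(s)\leq\sum_k f_{\tilde\sigma^2}(s+kq)\leq 1-\alpha$. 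An $\epsilon$-closeness argument would only give the comparison up to an additive $\mathcal{O}(\epsilon)$ error rather than the clean one-sided bound the algebra needs, so this part of your proposal remains a genuine gap rather than deferred bookkeeping.
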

\begin{proof}
See appendix.
\end{proof}

Theorem~\ref{thm:bounds} does not claim that finding the shortest filtering vector $d$ is simpler for our problem. Rather, it says that the strength of the hypothesis test for detection does not deteriorate as quickly with growing $\Vert d\Vert_2$ as the hypothesis test that solves the LWE problem.

We will not state the corresponding result for Problem~\ref{prob:secondary}. However, we argue that the bound will increase with a similar rate since the density function shown in Proposition~\ref{prop:prop_rejection_rule} is the average of mean-shifted discrete normals, $\mathcal{N}_{\mathbb{Z}_q}(kr,d^\top \Sigma_p d)$, $\vert k\vert  \leq \epsilon_\infty \Vert d_{\epsilon_0} \Vert_1$. In fact Fig.~\ref{fig:T2errs} verifies this claim for a few different parameters. Note how the Type~II Error for revealing the secret key starts close to $1-\alpha$.

\begin{rem}
    At first glance, Theorem~\ref{thm:bounds} seems to violate the security result in~\cite{regev2009} (Lemma~5.4), which states that the LWE problem can be solved if encryptions of $0$ and $M\neq 0$ can be distinguished efficiently. Formally, if we denote the procedure in Section~\ref{sec:results} as $W$, where $W$ outputs 1 if there is an alarm and $0$ otherwise, then the LWE problem can be solved efficiently if  $ \vert \mathrm{Pr}(W\vert m = 0)-\mathrm{Pr}(W \vert m = M)\vert\geq 1/\mathrm{O}(v^c)$, for some $c>0$. However, the binary choice of messages changes $\mathcal{H}_1$ into $\mathcal{N}_{\mathbb{Z}_q}(M,\tilde \sigma^2)$. Since $H_0: \mathcal{N}_{\mathbb{Z}_q}(0,\tilde\sigma^2)$ remains the same, the procedure $W$ remains the same, only the Type~II error changes. Note that, $ \mathrm{Pr}(W\vert m = 0) = \alpha$ is the Type~I error, and $\mathrm{Pr}(W \vert m = M)=1-\beta$, where $\beta$ is the Type~II error. Thus the security is violated if $\vert \beta - (1-\alpha) \vert\geq 1/\mathcal{O}(v^c)$. Showing that $\beta$ approaches $(1-\alpha)$ exponentially, thus not breaking the security, is omitted, however, we shall outline a few steps: First, for any region $\mathcal{S}_{\tilde \sigma^2} \not \ni M $ we have the following bound $\vert \beta - (1-\alpha) \vert\leq \mathcal{O} \left((1-\alpha) - \sum \limits_{k \in \mathcal{S}_{\tilde \sigma^2} } \mathrm{e}^{-\frac{(k-M)^2}{2\tilde\sigma^2}}/q \right)$, second, since $\vert \mathcal{S}_{\tilde \sigma^2} \vert $ grows with $\tilde \sigma^2$, we pick the element $\bar{ s} \subset \mathcal{S}_{\tilde \sigma^2\to \infty}$ closest to $M$, so that  $\vert \beta - (1-\alpha) \vert\leq \mathcal{O} \left( 1- \mathrm{e}^{-\frac{(\bar s-M) ^2}{2\tilde\sigma^2}} \right)$. Since $\tilde \sigma^2 = \Vert d \Vert_2^2 \sigma^2$, since short $d$ are difficult to find in $q$-ary lattices~\cite{ajtai1996}, we can assume they are not sublinear in $v$. Thus, we have that $\vert \beta - (1-\alpha) \vert \not \geq 1/\mathcal{O}(v^c)$.
    \end{rem}

\begin{figure}
\hspace{-0.4cm}\includegraphics[width=0.5\textwidth,trim={0 10 40 10}, clip]{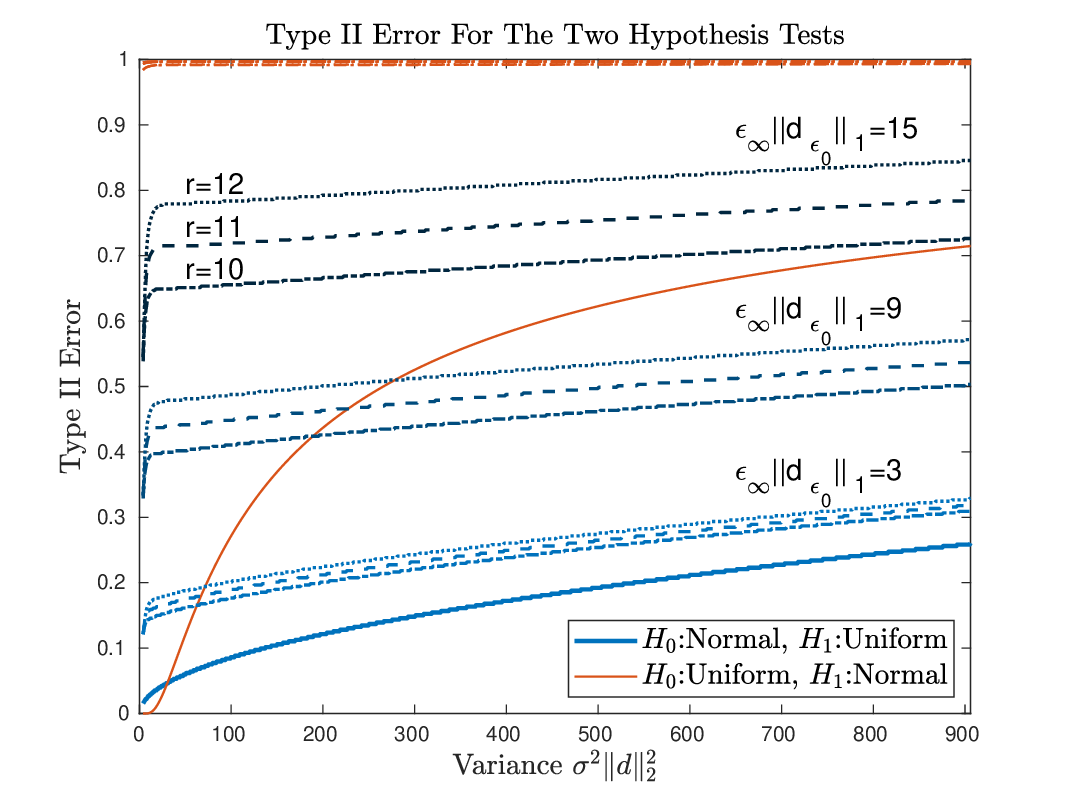}
    \caption{The figure shows how the Type~II Errors change when the plant output is not predicted precisely, for instance, due to quantization errors. The message space is chosen to be $q=453$. While the initial Type~II Errors becomes larger as $r$ and $\epsilon_\infty \Vert d_{\epsilon_0}\Vert_1$ increases, the rate seems to approach the one we proved in Theorem~\ref{thm:bounds}. Finally, note how the Type~II Error for the test that reveals the secret key immediately is close to $1$ for the considered cases.}
    \label{fig:T2errs}
\end{figure}

\section{Controllers, Short Vectors, and Detection} \label{sec:controllers}
This section will discuss obtaining $\Sigma_p$ from the controller implementation and $d$ from a structured short vector problem, both of which are needed for detection. Although multiple ways exist to implement an LWE-compatible controller, such as~\cite{kim2022dynamic}, we will consider a dynamic controller that periodically resets its internal state, $x_k^c = x_{ss}^c$ every $k_p$ time steps, to counteract noise build-up. Let the nominal controller be given by
\begin{equation}\label{eq:controller_real}
    \begin{cases}
    \begin{aligned}
    x^c_{k+1} & =  A_cx^c_k + B_c \bar y_{k}, \\
     u_k & =   C_cx^c_k + D_c \bar y_{k}. 
    \end{aligned}
    \end{cases}
\end{equation}
For instance, such a controller can be obtained by combining an optimal feedback law with a Kalman filter. One may also add states that do not affect $u_k$ by properly choosing extended matrices $A_c$ and $C_c$. To ensure compatibility with the form in~\eqref{eq:controller_estimates_output}, we assume that the output can be estimated from the controller states and, potentially, a direct term scaled by $D_0$:
\begin{equation} \label{eq:estimate_output}
    \hat y_k= C_0x_k^c+D_0\bar y_{k}.
\end{equation}
Implementing a real controller under the LWE-based encryption scheme requires it to work with integer representations of signals and systems. One controller in this paper will be a special case of the one introduced in~\cite{murguia2020}, namely
\begin{equation}\label{eq:integer_controller}
    \begin{cases}
    \begin{aligned}
    c^{\bar k+1}x^c_{k+1} & =  \lceil c A_c \rfloor c^{\bar k} x^c_k + \lceil cB_c \rfloor c^{\bar k} y^e_{k} \bmod q, \\
     c^{\bar k+1} u^e_k & =   \lceil c C_c \rfloor c^{\bar k} x^c_k + \lceil c D_c \rfloor c^{\bar k} y^e_{k} \bmod q, \\
     c^{\bar k+1} \hat{y}^e_{k} & = \lceil c  C_o \rfloor c^{\bar k} x_k^c+\lceil c  D_o \rfloor c^{\bar k} y_{k}^e \bmod q, \\
     x^c_k & = x^c_{ss}, \, \text{when } \bar k= k - \left\lfloor k/k_p\right\rfloor k_p = 0.
    \end{aligned}
    \end{cases}
\end{equation}
Note that the difference from the one in~\cite{murguia2020} is that~\eqref{eq:integer_controller} has a different basis, quantified by scaling factor $c$ instead of a binary representation, and it operates over negative numbers.

Let us briefly explain this controller implementation. Since the controller must be compatible with the integer representation of signals, we approximate the controller matrices by rounding after scaling with some factor $c>1$, $A_c \rightarrow \frac{\lceil cA_c \rfloor}{c}$ and $B_c \rightarrow \frac{\lceil cB_c \rfloor}{c}$. The control state update becomes:
\begin{align*}
    x_{k+1}^c & = A_cx^c_k + B_c \bar y_{k} \approx \frac{1}{c}\left ( \lceil c A_c \rfloor x^c_k + \lceil cB_c \rfloor \bar y_{k} \right ), \\
     & \Rightarrow cx_{k+1}^c \coloneqq \lceil c A_c \rfloor x^c_{k} + \lceil cB_c \rfloor \bar y_{k}.
\end{align*}

The last line comes from the restriction that division is not allowed. The compatible controller, therefore, scales its updated state, $cx_{k+1}^c$, and is used in the next iteration. The measurement, $\bar y_{k+1}$, must also be scaled with $c$ to ensure consistency with the magnitudes
\begin{equation*}
    c^2x_{k+2}^c= \lceil c A_c \rfloor c x^c_{k+1} + \lceil cB_c \rfloor c \bar y_{k+1}.
\end{equation*}
Thus, when simulating integer representation of systems, the dynamical state grows quickly for every iteration, even if the simulated system is stable. Eventually, an overflow will occur after many multiplications, known as the \emph{multiplicative depth}. The last line in~\eqref{eq:integer_controller} is therefore introduced to reset the controller's state to a value (typically an integer) that can be represented without scaling. Furthermore, the state reset also removes any accumulated noise built up due to the noise injection in the LWE-cryptosystem.

Finally, the other signals of controller~\eqref{eq:integer_controller}, $u^e_k$ and $y^e_k$, and the matrices $C_c$, $D_c$, $ C_o$, and $ D_o$, are treated analogously.

\begin{rem}
    The scaling factor $c^{\bar k+1}$ in the encrypted control messages, $c^{\bar k+1}u^e_k$, implies that the plant has to downscale the message with the corresponding factor after decryption. Therefore, time-step coordination between the plant and the controller may be required. 
\end{rem}
\begin{figure*}
\begin{equation} \label{eq:long_dyn_cont}
    \begin{bmatrix}
        c \hat y^e_k \\ c^2 \hat y_{k+1}^e \\ \vdots \\ c^{k_p}\hat y_{k+k_p-1}^e
    \end{bmatrix}  
    = \begin{bmatrix}
        \lceil c D_o \rfloor & \textbf{0} & \cdots & \textbf{0} \\
        \lceil c C_o \rfloor \lceil c B_c \rfloor & \lceil c D_o \rfloor c & \cdots & \textbf{0}  \\
        \vdots & \vdots &  \ddots &  \vdots \\
         \lceil c C_o \rfloor \lceil c A_c \rfloor^{k_p-2} \lceil c B_c \rfloor &  \lceil c C_o \rfloor \lceil c A_c \rfloor^{k_p-3} \lceil c B_c \rfloor c & \cdots &  \lceil c D_o \rfloor c^{k_p-1}
    \end{bmatrix} \begin{bmatrix}
         y^e_{k} \\  y_{k+1}^e \\ \vdots \\  y_{k+k_p-1}^e
    \end{bmatrix} 
    = \mathcal{T} Y^c_{k:k+k_p-1}.
\end{equation}
\hrule
\end{figure*}

To perform the anomaly detection in Theorem~\ref{thm:main_theorem_rejection_rule}, the covariance matrix to the output estimate of~\eqref{eq:integer_controller} is required. Let us, for simplicity, assume that $x_{ss}^c=0$. Then, assuming that the modulus operator is not triggered, we can write the output trajectory between resets as in Equation~\eqref{eq:long_dyn_cont}, where $\mathcal{T}$ denotes the Toeplitz matrix of the controller parameters, and $Y^c_{k-1:k+k_p-2}$ denotes the scaled encrypted measurements. Running the controller for multiple periods of length $k_p$ results in the encrypted trajectory:
\begin{equation*}
    \hat Y^c =\begin{bmatrix}
        \mathcal{T} & \textbf{0} & \cdots & \textbf{0} \\
        \textbf{0} & \mathcal{T} & \cdots & \textbf{0} \\
        \vdots & \vdots & \ddots & \vdots \\
        \textbf{0} & \textbf{0} & \cdots & \mathcal{T}
    \end{bmatrix} Y^c = (I \otimes \mathcal{T})Y^c,
\end{equation*}
for the periodically scaled encrypted output trajectory $Y^c$ and the expected encrypted trajectory $\hat Y^c$ with a length that is a multiple of $k_p$, $N=lk_p$, for $l \in \mathbb{N}$. The scaled encrypted residual trajectory in~\eqref{eq:encrypted_residual} is then:
\begin{equation}\label{eq:transform_encrypt}
   \rho^c = (I \otimes \mathcal{T})Y^c-Y^c = (I \otimes (\mathcal{T}-I))Y^c.
\end{equation}
The scaled residual transforms the scaled encrypted output in a way determined by the controller's and plant's dynamics through $\mathcal{T}$.

Now, assume that $Y^c \sim \mathcal{N}_{\mathbb{Z}_q}(\hat Y^e,\Sigma)$. Then, transformation~\eqref{eq:transform_encrypt} says that:
\begin{equation*}
    \rho^c \sim \mathcal{N}_{\mathbb{Z}_q}(\textbf{0},\Sigma_p), 
\end{equation*}
where $\Sigma_p = (I \otimes (\mathcal{T}-I)) \Sigma (I \otimes (\mathcal{T}-I))^\top=T_c \Sigma T_c^\top$. We must use this covariance matrix when testing for attacks in Theorem~\ref{thm:main_theorem_rejection_rule}. By designing the controller, one may influence $\mathcal{T}$, and by extension, the variance in the hypothesis test. Furthermore, the controller reset every $k_p$ time step induces a periodic error in the estimator, so that $\epsilon_0, \epsilon_\infty \neq 0$.


\subsection{Detection Strength From Short Vectors}
\label{sec:vectors}

Let us now address the problem of finding short filtering vectors, in terms of small $\Vert d \Vert_2$. We have seen in Sections~\ref{sec:hyptest} how larger norms negatively impacts the detection capabilities. To obtain short filtering vectors we have to apply a two step process; 1)~obtaining initial (typically long) vectors, and 2)~shortening the obtained vectors. The initial vectors can be obtained as a "by-product" of the HNF to $\mathcal{P}$, as mentioned in Definition~\ref{def:qaryLattice}. In particular, for sufficiently large $N$ we will have $N-Lpv$ filtering vectors  $d_i\in \Ker_q \mathcal{P}$, $i \in {1, \, \dots, \,  N-Lpv}$, with high probability, which can be computed with the reciprocal of the Riemann Zeta function. We place these vectors as columns in $\mathcal{D}= \begin{bmatrix}
         d_1 & \dots &  d_{N-Lpv}
    \end{bmatrix}$. Then 
\begin{equation} \label{eq:HNF}
 \mathcal{P} \begin{bmatrix}
        U & \mathcal{D}
    \end{bmatrix} = \begin{bmatrix}
        \mathcal{P} U & \mathcal{P} \mathcal{D}
    \end{bmatrix} = \begin{bmatrix}
        H_{\mathcal{P}} & 0_{v \times (N-Lpv)}
    \end{bmatrix},
\end{equation}
     the HNF of $\mathcal{P}$, $H_{\mathcal{P}} \in \mathbb{Z}_z^{v \times v}$, is a full-rank lower triangular matrix, and $\begin{bmatrix}
        U & \mathcal{D}
    \end{bmatrix}$ is an unimodular matrix. The HNF will typically yield long column vectors in $\mathcal{D}$. However, since
    \begin{equation*}
    \begin{aligned}
        0 & = (c_id_i^\top\mathcal{P} \bmod q) +(c_jd_j^\top\mathcal{P} \bmod q ) \bmod q \\
        & =  (c_id_i +c_jd_j)^\top\mathcal{P} \bmod q,
        \end{aligned}
    \end{equation*}
a search for shorter filtering vectors can be done by considering integer-linear combinations of the columns in $\mathcal{D}$. Since only integer-linear combinations are considered, the columns in $\mathcal{D}$ can be treated as a basis for a lattice $\mathcal{L}$, and the search for short vector is equivalent to finding short vectors in $\mathcal{L}$.

\begin{rem}
    Typically, and especially for when $q$ is prime, one can find a $\mathcal{D}$ so that its diagonal elements are only ones. One can the fully express a lattice in $\mathbb Z_q$ by complementing $\mathcal D$ with the matrix $\mathcal{D}_q$ whose columns only contain one non-zero element, $q$, so that $\bar{\mathcal{D}}= \begin{bmatrix}
        \mathcal{D} & \mathcal{D}_q
    \end{bmatrix}$ has full rank and its diagonal is first composed of ones then of elements with $q$.
\end{rem}

Finding short vectors can be done using the LLL algorithm. While the LLL algorithm runs in polynomial time, the shortest vector it outputs is only guaranteed to be within an \emph{exponentially} close distance to the shortest vector in the lattice,
    \begin{equation} \label{eq:first_basis_length}
    \min_{d \in \text{LLL}(\mathcal{D})} \Vert d \Vert_2 \leq \frac{1}{\left (\delta-\frac{1}{4}\right )^{\frac{N-Lpv}{2}}} \min_{d \in \mathbb{L}} \Vert d \Vert_2,
\end{equation}
where the Lov\'ascz number $\delta$ is $\frac{1}{4}<\delta < 1$. Furthermore, the output is sensitive to the order of $\mathcal{D}$'s columns, leading to combinatorially many different potential initial bases.

However, recall that we are testing on $\vert d^\top \rho^e \vert$, which implies that if the $k$:th element in $d$ is non-zero, then $\rho_k^e$ is included in detection scheme. Thanks to this physical meaning attached to the elements of $d$, we can order the intial basis so that the search in the LLL among short vectors is done by using the most recent samples first.

\section{Numerical results} \label{sec:numerical}

We will now construct an LWE encrypted control setup that allows for anomaly detection, and perform anomaly detection over windows of samples to show how Theorem~\ref{thm:main_theorem_rejection_rule} can be used to detect sensor bias attacks over encrypted signals. The length of the detection windows will be determined by how often we switch the secret vector $s$, which in turn depends on the security level of the cryptosystem scheme and the assumed strength of the attacker.


Consider the dynamical system~\eqref{eq:system}, with the parameters
\begin{equation} 
\begin{aligned}
    A & =\begin{bmatrix}
     1.001 &  0.4  \\
      0 &  0.1 
    \end{bmatrix}, \, B = \begin{bmatrix}
            0 \\
     0.4722 
    \end{bmatrix}, \\
     C & = \begin{bmatrix}
        1 &  0
    \end{bmatrix}, \, D = 0, \, \Sigma_w  =0, \, \Sigma_v=0,
\end{aligned} \label{eq:numerical_system}
\end{equation}
which is an unstable system that needs to be stabilized by a quantized controller. In particular, we consider quantizing the dynamic controller of the form in~\eqref{eq:integer_controller}
with
\begin{equation*}
\begin{aligned}
    \lceil cA_c \rfloor & = \begin{bmatrix}
     1 &  1 & 0  \\
      -1 & -1 & 1 \\
      0 & 0 & 2
    \end{bmatrix}, & \lceil cB_c \rfloor & = \begin{bmatrix}
            1 \\
            -2 \\
     -2 \end{bmatrix}, \\
     \lceil cC_c \rfloor & = \begin{bmatrix}
        -2 &-3 &  1
    \end{bmatrix}, &  \lceil c D_c \rfloor & = 0, \\ 
    \lceil cC_o \rfloor & = \begin{bmatrix}
        1 & 0 &  0
    \end{bmatrix}, &  \lceil cD_o \rfloor & = 1,
\end{aligned}
\end{equation*}
The quantization scheme is implemented using $c_s=5$, and $c=2$, which are compatible with system~\eqref{eq:numerical_system} and the LWE-based encryption with parameters $\lambda=(64,300,10,2^{16}+1)$, where a reset is performed every $k_p=4$ time steps to $x^c_{ss}=\begin{bmatrix}
    0 & 0 & 0
\end{bmatrix}^\top$. This reset horizon is sufficiently long to ensure that the resulting closed-loop system is stable. The parameters allow us to compute $T_c$ from~\eqref{eq:long_dyn_cont} which will be used in the detection mechanism.

Using the Lattice Estimator~\cite{Albrecht2015}, the chosen cryptosystem parameters correspond to a $39$-bit security level once more than $2v = 128$ samples are collected. A (high-end) 9 GHz CPU would take approximately $\frac{2^{39}}{9 \cdot 10^9} \approx 60$ seconds to crack a cryptosystem with that level of security. It is, therefore, required that the secret key is reset not later than approximately a minute after 128 samples have been disclosed. However, Theorem~\ref{thm:bounds} states that the detection becomes better when shorter vectors are used in the detection mechanism. Fig.~\ref{fig:length_structure_tradeoff} shows how the LLL algorithm outputs shorter vectors the more samples are included in the lattice problem. Although the decay is monotonical, it slows down after a while, indicating that more samples will only have a marginal impact. However, the LLL algorithm's run time will increase, highlighting the trade-off between detection strength (the number of vectors) and computational time.


\begin{figure}
\centering
    \includegraphics[width=1\linewidth,trim={30pt 0pt 40pt 0pt},clip]{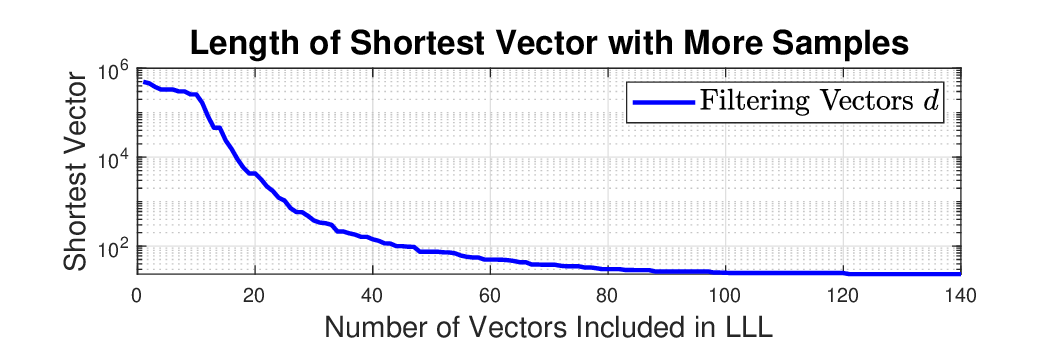}
    \caption{The LLL algorithm outputs shorter vectors as more samples are included, but the decay slows down, indicating marginal impact from additional samples. Here we have used vectors from the LWE problem with  $v = 10$ and $q = 227$.}
    \label{fig:length_structure_tradeoff}
\end{figure}



We will assume that the closed-loop system runs another $v+k_p=64+4$ time steps during that minute (resulting in a sampling time of 0.88 Hz) before performing a secret key reset. In summary, the secret key is reset every $3v+k_p=196$ time steps to ensure that the attacker cannot uncover the secret key during its use, while also providing the detector with as many samples as possible to enable detection with a high detection strength.



Now consider the system with the initial state $x_0= \begin{bmatrix}
    0.1 & 0
\end{bmatrix}^\top$, which is stabilized around $x_{ss}= \begin{bmatrix}
    0 & 0
\end{bmatrix}^\top$. An attacker initializes a sensor bias attack at $k_{a}=800$, where a constant $y_a=r$ is injected persistently. The injection is possible due to the homomorphic property of the LWE cryptosystem, which results in the following signal being sent to the controller
\begin{equation*}
    y_k^e = \begin{cases} \begin{aligned}  &  \enc_{s_k}(y_k),  &\text{for } k < k_{a}, \\ &  \enc_{s_k}(y_k) + (0, r) \bmod q,  &\text{for } k_{a} \leq k,
    \end{aligned} \end{cases} 
\end{equation*}
where we have $(P_k, P_ks_k + r\lceil c y_k \rfloor + e_k ) + (0, r) \bmod q$ in the second line. Equivalently, the controller receives the encrypted message $\enc_{s_k}(y_k+1)$ from $k=k_{a}$, and onward. 

Fig.~\ref{fig:esrsv} shows the encrypted residual of the system, including changes of the secret vector $s$ every $196$ time steps. The message part of the purely encrypted signals seems uniformly random, and - due to the pseudo-random property - it will pass any tests for uniformity. Furthermore, it is not possible to use ordinary statistical methods to detect that an attack is present from time step $k=800$ and onwards. Applying the test from Theorem~\ref{thm:main_theorem_rejection_rule}, we obtain an aggregate of the residuals that we can perform detection on. Choosing false alarm rates of $1\%$, $5\%$, and $32\%$, respectively, we get the thresholds shown in Fig.~\ref{fig:detection}. One can see, for instance, that all detectors react to the initial condition being different from the assumed one of $x_{ss} = \begin{bmatrix}
    0 & 0
\end{bmatrix}^\top$. Thereafter, two out of three detectors react to when the attack is initialized, only to gradually start missing the subsequent detections. Looking back at Fig.~\ref{fig:esrsv}, we observe that $\sigma^2\Vert T_c^\top d \Vert_2^2 \approx 5 \cdot 10^7$, which for the $5\%$ false alarm threshold puts the Type II error at $50\%$, meaning that approximately half of the attacked windows are missed by the detector, which the numerical results confirm.

\begin{figure}
    \centering
    \includegraphics[width=1\linewidth, trim=30pt 70pt 40pt 70pt, clip]{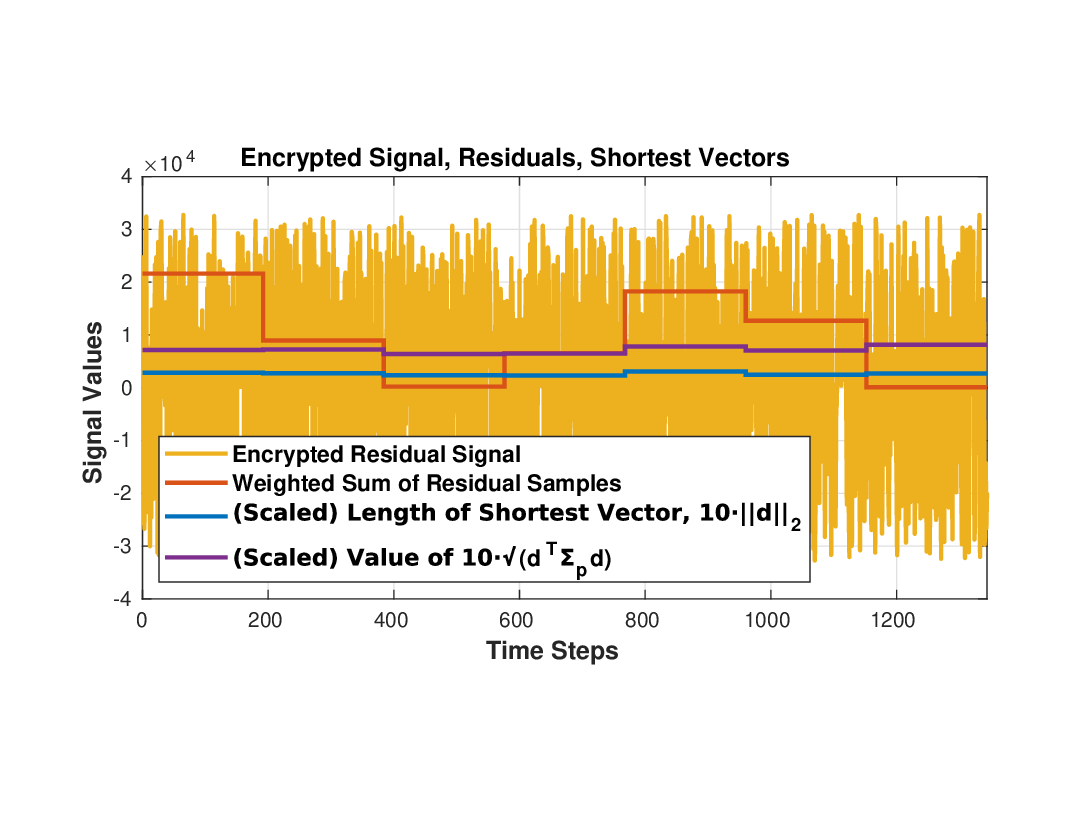}
    \caption{The figure displays the encrypted residual, where the observer state resets every $k_p=4$ time step, with the secret key $s$ refreshing every $3v+k_p=196$ time steps. Only samples with the same secret key can be used for detection, and the figure shows the shortest vector (scaled) that the PotLLL algorithm outputs for each such window.}
    \label{fig:esrsv}
\end{figure}


We would like to end the numerical results by mentioning that the $50\%$ Type II error for the detector with $5\%$ false alarm probability was achieved when the objective was to minimize $\Vert d \Vert_2^2$. A stronger detector would be obtained if $\Vert T_c^\top d \Vert_2^2$ was minimized instead. Initial naïve searches show that a factor of $5$ could be removed from the resulting smallest $\Vert T_c^\top d \Vert_2^2$, indicating that a potential to obtain a Type II error of $10\%$ could be achieved with a different lattice reduction approach.


\begin{figure}
    \includegraphics[width=1\linewidth, trim=30pt 75pt 40pt 78pt, clip]{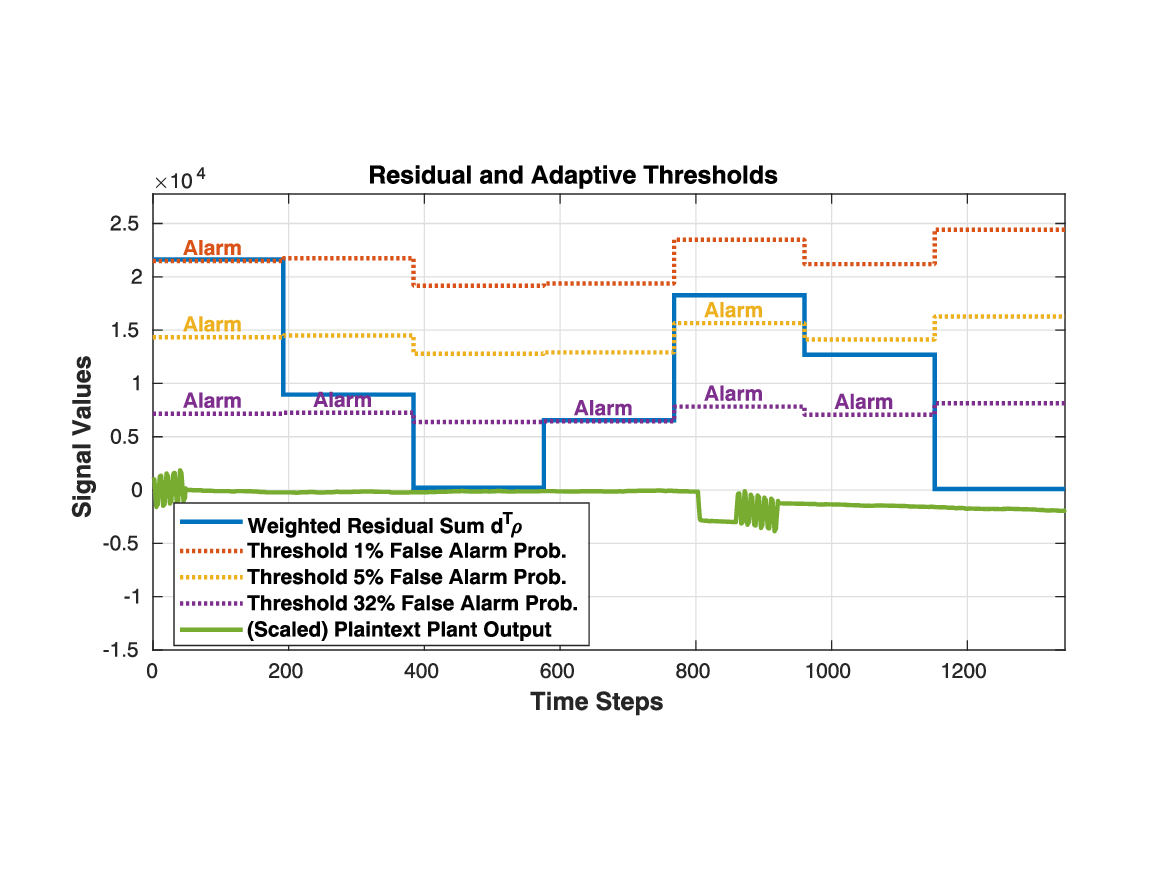}
    \caption{The graph displays the computed residuals for seven different windows using the proposed scheme, along with threshold values for 1\%, 5\%, and 32\% false alarm probabilities. The system state is initialized at a non-zero value, and the controller stabilizes it around zero within the first window, as visible in the plaintext output. Alarms are raised for all three thresholds until the observer's state matches the plant's. At 80 seconds, an attack affects the plaintext output, resulting in generally higher residuals, which demonstrates the feasibility of the proposed scheme as an anomaly detection mechanism. In the non-zero residual windows, $\sigma^2\Vert T_c^\top d \Vert_2^2$ ranges between $5 \times 10^7$ and $9 \times 10^7$ (not shown), causing the 5\% false alarm threshold to miss approximately half of the detections.}
    \label{fig:detection}
\end{figure}

\section{Conclusions}\label{sec:conclusions}
This paper has considered anomaly detection over LWE-based encrypted signals without using secret keys, evaluation keys, or overhead in terms of multi-round communication. In particular, the detector relies on using the controller's intrinsic state estimator to output a prediction of the output based on the input signal it computes. The anomaly detector compares the predicted output with the actual output to create a particular aggregate of the residual that removes the dependence on the secret and public keys. Furthermore, we use the same methods as the original LWE paper~\cite{regev2009}, to show that the original security is not compromised in our setup.


We show how the detection mechanism depends on a vector reduction problem, similar to the security mechanism. Detection over encrypted signals does not require very short vectors; moderately short vectors suffice. However, our proposed detector is sensitive to errors made by the controller's estimator. We provide an initial analysis of how these errors affect detection performance, but further development is needed. The combination of unstable systems and control over quantized signals can lead to limit cycle behavior, including chaos. Co-designing the controller, quantization, and cryptosystem to handle these situations and avoid chaos is an exciting topic for future research, enabling detection over encryption for a broader range of systems.

The search for short vectors, performed here using the (Potential) LLL algorithm, can be extended using stronger lattice reduction algorithms, such as the (Block-)Korkine-Zoltarev algorithms. Although these algorithms are highly sequential, the windowed detection proposed in this paper allows for a degree of parallelism. An interesting future direction is to perform weighted lattice reduction, taking into account the noise and dynamics collected in $\Sigma_p$. We have explored the trade-off between statistical power and false alarm rates. Future work will consider additional trade-offs, such as the frequency of state resets and different controller implementations in the encrypted scheme. Another direction is to prioritize less noisy samples of the residual during vector reduction and to use multiple filtering vectors simultaneously. Transformations of the encrypted residual to improve detectability or reduce detection time will also be considered.

\section*{References}
\bibliographystyle{ieeetr}
\bibliography{bibliography.bib}

\section*{Appendix: Proof of Theorem~\ref{thm:bounds}}
Let us start by dealing with the modulo operation. In particular, sums of samples from the discrete normal on $\mathbb Z_q$ may end up on $\mathbb Z_Q$ so we need to account for it. Choosing $\sigma^2$ according to Assumption~\ref{asm:noise_var} implies that, within the statistical distance, we can use a discrete normal with a larger support,
    \begin{equation}\label{eq:temp_approx}
    \left \vert \mathrm{Pr}_{\mathbb{Z}_q}(x)- \mathrm{Pr}_{\mathbb{Z}_Q}(x)\right \vert \leq \mathcal{O}(\epsilon), \quad \text{for } x \in \mathbb{Z}_q.
    \end{equation}
Due to HNF, the vector $d$ will have $v+1$ nonzero elements, all in $\mathbb{Z}_q$. However, combining $l > 1$ vectors to reduce $\Vert d \Vert_2$ will result in $v+l$ non-zero elements. The support of the sum will then be in $\pm \left (\frac{q-1}{2} \right)^2(v+l)$. Consider the following function,\begin{equation}\label{eq:discrete_gaussian_cdf}
        f_{\tilde \sigma^2}(s) \coloneqq \frac{\sum \limits_{x = -\left \lfloor\frac{s}{2}\right \rfloor}^{\left \lfloor\frac{s-1}{2}\right \rfloor} \mathrm{e}^{-\frac{x^2}{2 \tilde \sigma^2}}}{S_Q}, \quad \text{where} \, S_Q=\sum \limits_{x \in \mathbb Z_Q} \mathrm{e}^{-\frac{x^2}{2 \tilde \sigma^2}}.
    \end{equation}
    The Type II Error is given by $\beta = \frac{s}{q}$ where we obtain $s \in \mathbb{Z}_q^+$, by finding an $s$ so that $\sum \limits_{k} f_{\tilde \sigma^2}(s+kq)\leq 1-\alpha$. Our idea is simple: since~\eqref{eq:discrete_gaussian_cdf} is the cumulative distribution function of a \emph{discrete} normal, we approximate it by the \emph{continuous} normal.
    \begin{equation*}
        \frac{\int \limits_{-\frac{s}{2}}^{\frac{s}{2}} \mathrm{e}^{-\frac{x^2}{2\tilde \sigma^2} \mathrm{d}x}}{I_Q} \approx f_{\tilde \sigma^2}(s), \quad \text{where} \, I_Q = \int \limits_{-\frac{Q}{2}}^{\frac{Q}{2}} \mathrm{e}^{-\frac{x^2}{2\tilde \sigma^2} }\mathrm{d}x. 
    \end{equation*}
Note that we get exact equality at $s=Q$. For odd $Q$, the discretization of the integrals leads to the following inequality
    \begin{align*}
        & \sum \limits_{x = -\frac{Q-1}{2}}^{\frac{Q-1}{2}} \mathrm{e}^{-\frac{x^2}{2 \tilde \sigma^2}}- \int \limits_{-\frac{Q}{2}}^{\frac{Q}{2}} \mathrm{e}^{-\frac{x^2}{2\tilde \sigma^2}} \mathrm{d}x \\
        & = 2\left (\sum \limits_{x = 0}^{\frac{q-1}{2}} \mathrm{e}^{-\frac{x^2}{2 \tilde \sigma^2}}- \int \limits_{0}^{\frac{Q}{2}} \mathrm{e}^{-\frac{x^2}{2\tilde \sigma^2}} \mathrm{d}x \right) - 1 \\ 
        & \geq  2\left (\sum \limits_{x = 1}^{\frac{Q-1}{2}} \mathrm{e}^{-\frac{x^2}{2 \tilde \sigma^2}}- \int \limits_{1}^{\frac{Q}{2}} \mathrm{e}^{-\frac{x^2}{2\tilde \sigma^2}} \mathrm{d}x + \right) - \mathrm{e}^{-\frac{1^2 }{2\tilde \sigma^2}} \geq \dots \\
        & \geq 2\left (\sum \limits_{x = \frac{Q-1}{2}}^{\frac{Q-1}{2}} \mathrm{e}^{-\frac{x^2}{2 \tilde \sigma^2}} - \int \limits_{\frac{Q-1}{2}}^{\frac{Q}{2}} \mathrm{e}^{-\frac{x^2}{2\tilde \sigma^2}} \mathrm{d}x\right) - \mathrm{e}^{- \frac{\left(Q-1\right)^2}{8 \sigma^2}} \geq 0,
    \end{align*}
    meaning that $S_Q>I_Q$. The inequalities follow from the convexity of $\mathrm{e}^{-x^2}$ for $x > 0$, and
    \begin{equation} \label{eq:helper_ineq_sum_int}
        \mathrm{e}^{-\frac{x^2}{2\tilde \sigma^2}} + \mathrm{e}^{-\frac{(x+1)^2}{2\tilde \sigma^2}} \geq 2 \int \limits_x^{x+1} \mathrm{e}^{-\frac{\tau^2}{2\tilde \sigma^2}}\mathrm{d}\tau, \quad \text{for } x>0.
    \end{equation}
    For even $Q$, extra terms $\mathrm{e}^{\frac{q^2}{8\tilde \sigma^2}}-2\int \limits_{\frac{q-1}{2}}^{\frac{q}{2}} \mathrm{e}^{-\frac{x^2}{2\tilde \sigma^2}}\mathrm{d}x$ appear, however, they can be removed at the last step with~\eqref{eq:helper_ineq_sum_int}. Furthermore,
    \begin{equation*}
        S_Q\int \limits_{x-1}^{x} \mathrm{e}^{-\frac{\tau^2}{2\tilde \sigma^2}}\mathrm{d}\tau \geq I_Q \mathrm{e}^{-\frac{x^2}{2 \tilde \sigma^2}}, \quad \text{for } x>0,
    \end{equation*}
    since $\int \limits_{x-1}^{x} \mathrm{e}^{-\frac{\tau^2}{2\tilde \sigma^2}}\mathrm{d}\tau \geq \mathrm{e}^{-\frac{x^2}{2 \tilde \sigma^2}}$. Therefore, we have that
    \begin{equation*}
        \frac{\mathrm{e}^{-\frac{x^2}{2 \tilde \sigma^2}}}{S_Q}-\frac{\int \limits_{x-1}^{x} \mathrm{e}^{-\frac{\tau^2}{2\tilde \sigma^2}} \mathrm{d} \tau}{I_Q} \leq 0, \quad \text{for } x>0.
    \end{equation*}
Since the difference is negative, we can use it to bound the difference between the discrete and continuous Gaussian.
\begin{equation} \label{eq:inequality_drivation}
    \begin{aligned}
        0 & = \frac{S_Q}{S_Q}-\frac{I_Q}{I_Q} \leq \frac{2\sum \limits_{x = 0}^{\frac{Q-1}{2}} \mathrm{e}^{-\frac{x^2}{2 \tilde \sigma^2}}}{S_Q} - \frac{2\int \limits_{0}^{\frac{Q-1}{2}} \mathrm{e}^{-\frac{x^2}{2\tilde \sigma^2}} \mathrm{d}x}{I_Q} - \frac{1}{S_Q} \\
        & \leq f_{\tilde \sigma^2}(s) - \frac{\int \limits_{-\frac{s}{2}}^{\frac{s}{2}} \mathrm{e}^{-\frac{x^2}{2\tilde \sigma^2}} \mathrm{d}x}{I_Q}, 
    \end{aligned}
    \end{equation}
    where we have again assumed an odd $Q$. The first inequality comes from removing $-2\int \limits_{\frac{Q-1}{2}}^{\frac{Q}{2}} \mathrm{e}^{-\frac{x^2}{2\tilde \sigma^2}} \mathrm{d}x/I_Q$. If $s$ or $Q$ are \emph{even} integers then extra terms with the following form appear,
    \begin{equation*}
        \frac{\mathrm{e}^{-\frac{y^2}{2 \tilde \sigma^2}}}{S_Q} - \frac{2\int \limits_{y-\frac{1}{2}}^{y} \mathrm{e}^{-\frac{\tau^2}{2\tilde \sigma^2}} \mathrm{d}\tau}{I_Q} \leq 0,
    \end{equation*}
    where $y=Q$ or $y=s$. Since they are negative by the mean value theorem, they can just be removed from~\eqref{eq:inequality_drivation}. 
    
    Next, we use the bounds from~\cite{Chu1955}, to the continuous approximation,
    \begin{align*}
        \frac{(1-\mathrm{e}^{-a\frac{s^2}{4\tilde \sigma^2}})^{\frac{1}{2}}}{(1-\mathrm{e}^{-b\frac{Q^2}{4\tilde \sigma^2}})^{\frac{1}{2}}} \leq \frac{\int \limits_{-\frac{s}{2}}^{\frac{s}{2}} \mathrm{e}^{-\frac{x^2}{2\tilde \sigma^2} \mathrm{d}x}}{I} \leq f_{\tilde \sigma^2}(s) \leq \sum \limits_{k} f_{\tilde \sigma^2}(s+kq). 
    \end{align*}
    Inserting $\sum \limits_{k} f_{\tilde \sigma^2}(s+kq) \leq 1-\alpha$ and solving for $s$ we get,
    \begin{equation*}
         s^2 \leq {\frac{4\tilde \sigma^2}{-a}\log \left( 1- (1-\alpha)^2 \left (1-\mathrm{e}^{-b\frac{Q^2}{4\tilde \sigma^2}}\right )\right)}.
    \end{equation*}
    Observe the sublinearity claim of the Type II Error. In particular, we have $\beta \leq C_0 \sqrt{\tilde \sigma^2}$ initially ($\tilde \sigma^2 \to 0$), which then converges to a constant, $\sqrt{\frac{b}{a}(1-\alpha)^2}$, when $\tilde \sigma^2 \to \infty$.

\end{document}